\newcommand{\Oh}{\mathcal{O}}
\newcommand{\conv}{\mathop{\mathrm{conv}}}
\newcommand{\vertexset}{\mathop{\mathrm{vert}}}
\newcommand{\xc}{\mathop{\mathrm{xc}}} 
\newcommand{\np}{{\mathrm{NP}}}
\newcommand{\ptime}{{\mathrm{P}}}
\newcommand{\RR}{{\mathbb{R}}}
\newcommand{\CutP}{\mathrm{CUT}^\square}
\newcommand{\SAT}{{\mathrm{SAT}}}
\newcommand{\UNSAT}{{\mathrm{UNSAT}}}
\newcommand{\cf}{\mathcal{CF}}
\newcommand{\cnf}{\mathrm{CNF}}
\newcommand{\cutsat}{\mathrm{CUTSAT}}
\newcommand{\xor}{\oplus}
\newcommand{\nspace}[2]{\ensuremath{#1\text{-}\mathrm{NSPACE}(#2)}}
\newcommand{\dspace}[2]{\ensuremath{#1\text{-}\mathrm{DSPACE}(#2)}}
\newcommand{\dlin}{\ensuremath{\mathrm{DLIN}}}
\newcommand{\walk}{\omega}
\newcommand{\signature}{\sigma}
\newcommand{\pwalk}[2]{\mathrm{P_{markov}}({#2},#1)}
\def\uch{\uplus} 
\def\hypercube{\bm{\square}}
\def\vzero{\bm{0}}
\def\ee{\bm{e}}
\newcommand{\DEF}[1]{\slshape{#1}\normalfont}
\def\ff{\bm{f}}
\def\xx{\bm{x}}
\def\yy{\bm{y}}
\def\zz{\bm{z}}
\def\uu{\bm{u}}
\def\LL{\bm{L}}
\def\oneL{\mathrm{1L}}
\def\oneNL{\mathrm{1NL}}
\newcommand{\logspace}{\mathrm{LOGSPACE}}
\newcommand{\NL}{\mathrm{NL}}
\theoremstyle{plain}
\newtheorem{theorem}{Theorem}
\newtheorem*{theorem*}{Theorem}
\newtheorem{corollary}{Corollary}
\newtheorem{lemma}{Lemma}
\newtheorem{proposition}{Proposition}
\theoremstyle{definition}
\newtheorem{definition}{Definition}
\newtheorem{example}{Example}
\theoremstyle{remark}
\journal{arXiv}
\begin{document}

\begin{titlepage}

\begin{frontmatter}

\title{Extension Complexity of Formal Languages\footnote{This work was partially supported by grant no. GA15-11559S of GA{\v{C}R}}}



\author{Hans Raj Tiwary\\ 
	\texttt{hansraj@kam.mff.cuni.cz}\\
	KAM/ITI, Charles University,\\
	Malostransk\'e n\'am. 25,\\
	118 00 Prague 1, Czech Republic}

\begin{abstract} 
In this article we undertake a study of extension complexity from the perspective of formal languages. We define a natural way to associate a family of polytopes with binary languages. This allows us to define the notion of extension complexity of formal languages. We prove several closure properties of languages admitting compact extended formulations. Furthermore, we give a sufficient machine characterization of compact languages. We demonstrate the utility of this machine characterization by obtaining upper bounds for polytopes for problems in nondeterministic logspace; lower bounds in streaming models; and upper bounds on extension complexities of several polytopes. 
\end{abstract}

\begin{keyword}
Extended formulations \sep formal languages
\end{keyword}
\end{frontmatter}
\end{titlepage}

\section{Introduction}
A polytope $Q$ is said to be an extended formulation of a polytope $P$ if $P$ can be described as a projection of $Q$. Measuring the size of a polytope by the number of inequalities required to describe it, one can define the notion of \emph{extension complexity} of a polytope $P$ -- denoted by $\xc(P)$ -- to be the size of the smallest possible extended formulation.

Let $\bm{\varphi}$ be a boolean formula. Consider the following polytopes:

$$\begin{array}{lcl}
\mathrm{SAT}&=&\conv\left\{\xx\left|\xx\text{ encodes a satisfiable boolean
formula}\right.\right\},\\
\mathrm{SAT}(\bm{\varphi})&=&\conv\left\{\xx\left|\bm{\varphi}(\xx)=1\right.\right\}.
\end{array}$$

The former polytope consists of all strings that encode\footnote{Assume
some (arbitrary but fixed) encoding of boolean formulae as binary strings.}
satisfiable boolean formulae, while the latter language consists of all
satisfying assignments of a given boolean formula. Which of these represents the
boolean satisfiability problem \emph{more naturally}?

Reasonable people will agree that there is no correct choice of a natural
polytope for a problem. One complication is that there are various kinds of
problems: decision, optimization, enumeration, etc, and very similar problems
can have very different behavior if the notion of problem changes.

Several recent results have established superpolynomial lower bounds on the extension complexity of specific polytopes. For example Fiorini et al. \cite{jacm/FioriniMPTW15} showed that polytopes associated with MAX-CUT, TSP, and Independent Set problems do not admit polynomial sized extended formulations. Shortly afterward Avis and the present author \cite{mp/AvisT15} showed that the same holds for polytopes related to many other NP-hard problems. Subsequently Rothvo{\ss} \cite{stoc/Rothvoss14} showed that even the perfect matching polytope does not admit polynomial sized extended formulation. These results have been generalized in multiple directions and various lower bounds have been proved related to approximation \cite{BJLP2013, CLRS13, bfps2012} and semidefinite extensions \cite{FP13, mp/BDP15, LT12, stoc/LeeRS15}.

\medskip
A few fundamental questions may be raised about such results.
\begin{itemize}
	\item How does one choose (a family of) polytopes for a specific problem?
	\item To what extent does this choice affect the relation between extension complexity of the chosen polytope and the complexity of the underlying problem?
	\item What good are extension complexity bounds anyway?\footnote{Perfect Matching remains an easy problem despite exponential lower bound on the extension complexity of the perfect matching polytope. What does an exponential lower bound for the cut polytope tell us about the difficulty of the MAX-CUT problem?}
\end{itemize}

The intent of this article is to say something useful (and hopefully interesting) about such problems. In particular, our main contributions are the following.
\begin{itemize}
	\item We define formally the notion of extension complexity of binary language. Our definition is fairly natural and we do not claim any novelty here. This however is a required step towards any systematic study of problems that admit small extended formulations.
	\item We define formally what it means to say that a language admits small extended formulation. Again we do not claim novelty here since Rothvo{\ss} mentions similar notion in one of the first articles showing the existence of polytopes with high extension complexity \cite{mp/Rothvoss13}.
	\item We prove several closure properties of languages that admit compact extended formulations. Some of these results are trivial and some follows from existing results. For a small number of them we need to provide new arguments.
	\item We prove a sufficient condition in terms of walks on graphs and in terms of accepting Turing Machines, for a language to have polynomial extension complexity. We show how this characterization can be used to prove space lower bounds for non-deterministic streaming algorithms, and also to construct compact extended formulations for various problems by means of a small ``verifier algorithm''. We provide some small examples to this end.
\end{itemize}

\section{Background Material and Related Work}
\subsection{Polytopes and Extended Formulations}
A polytope $P\subseteq \mathbb{R}^d$ is a closed convex set defined as intersection of a finite number of inequalities. Alternatively, it can be defined as the convex hull of a finite number of points. Any polytope that is full-dimensional has a unique representation in terms of the smallest number of defining inequalities or points. The \emph{size} of a polytope is defined to be the smallest number of inequalities required to define it. For the purposes of this article all polytopes will be assumed to be full-dimensional. While in doing so, no generality is lost for our discussion, we will refrain from discussing such finer points.  We refer the reader to \cite{Ziegler/95} for background on polytopes.

A polytope $Q$ is called an \emph{Extended Formulation} or simply \emph{EF} of a polytope $P$, if $P$ can be obtained as a projection of $Q$. The \emph{extension complexity} of a polytope, denoted by $\xc(P)$, is defined to be the smallest size of any possible EF of $P$.

Extended formulations have a long history of study. We mentioned several of them in the introduction. Here we refer to only a handful of work that are closely related to this article. For more complete picture related to extended formulations, we refer the reader to the excellent surveys by Conforti et al. \cite{anor/ConfortiCZ13} and by Kaibel \cite{Kaibel11} as a point to start.

We will use the following known results related to extension complexity.

\begin{theorem}[Balas \cite{dam/Balas98}]\label{thm:xc_union}
Let $P_1$ and $P_2$ be polytopes and let $P=P_1\uch P_2$, where $\uch$ denotes the convex hull of the union. Then $\xc(P)\leqslant \xc(P_1)+\xc(P_2)+2$.
\end{theorem}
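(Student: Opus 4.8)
The plan is to exhibit an explicit extended formulation of $P:=\conv(P_1\uch P_2)$ by the classical homogenization (disjunctive programming) construction, starting from optimal extended formulations of $P_1$ and $P_2$. If either $P_i$ is empty the inequality is trivial, so assume both are nonempty polytopes in $\RR^d$ and fix extended formulations realizing their extension complexities: polytopes $Q_i=\{(\xx,\zz_i)\in\RR^d\times\RR^{k_i} : B_i\xx+C_i\zz_i\leqslant d_i\}$ with $\proj_{\xx}(Q_i)=P_i$, where the system $B_i\xx+C_i\zz_i\leqslant d_i$ consists of exactly $\xc(P_i)$ inequalities, for $i=1,2$.

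The construction is
$$Q=\left\{(\xx,\yy_1,\yy_2,\zz_1,\zz_2,\lambda)\;:\;\xx=\yy_1+\yy_2,\ B_1\yy_1+C_1\zz_1\leqslant\lambda d_1,\ B_2\yy_2+C_2\zz_2\leqslant(1-\lambda)d_2,\ 0\leqslant\lambda\leqslant1\right\}.$$
After eliminating the equality $\yy_2=\xx-\yy_1$, this polyhedron is cut out by the $\xc(P_1)+\xc(P_2)$ inequalities of the two blocks together with $\lambda\geqslant0$ and $\lambda\leqslant1$, i.e.\ by $\xc(P_1)+\xc(P_2)+2$ inequalities; it is bounded because $P_1$ and $P_2$ are, hence is a polytope. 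It then remains to prove $\proj_{\xx}(Q)=P$, which yields $\xc(P)\leqslant\xc(P_1)+\xc(P_2)+2$.

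The inclusion $\proj_{\xx}(Q)\supseteq P$ is immediate: a point $\lambda\xx^1+(1-\lambda)\xx^2$ with $\xx^i\in P_i$ and $\lambda\in[0,1]$ is lifted by picking witnesses $\zz^i$ for $\xx^i$ in $Q_i$ and setting $\yy_i=\lambda_i\xx^i$, $\zz_i=\lambda_i\zz^i$ with $\lambda_1=\lambda$, $\lambda_2=1-\lambda$. For $\proj_{\xx}(Q)\subseteq P$, when $\lambda\in(0,1)$ dividing the first block by $\lambda$ and the second by $1-\lambda$ shows $\yy_1/\lambda\in P_1$ and $\yy_2/(1-\lambda)\in P_2$, so $\xx=\lambda(\yy_1/\lambda)+(1-\lambda)(\yy_2/(1-\lambda))\in P$.

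The one step needing genuine care — and the main obstacle — is the boundary case $\lambda\in\{0,1\}$, where the division above breaks down. If $\lambda=0$, the first block reads $B_1\yy_1+C_1\zz_1\leqslant\vzero$, which places $(\yy_1,\zz_1)$ in the recession cone of $Q_1$; since $Q_1$ is a nonempty \emph{bounded} polyhedron this cone is $\{\vzero\}$, so $\yy_1=\vzero$, whence $\xx=\yy_2$ and $B_2\xx+C_2\zz_2\leqslant d_2$, i.e.\ $\xx\in\proj_{\xx}(Q_2)=P_2\subseteq P$; the case $\lambda=1$ is symmetric. This is precisely where boundedness is essential (the statement fails for general unbounded polyhedra). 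Everything else — the inequality count, boundedness of $Q$, and checking that the two indices play symmetric roles — is routine bookkeeping.
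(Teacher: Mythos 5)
Your proof is correct and is precisely the classical disjunctive-programming (homogenization) construction of Balas that the paper cites without giving a proof; the inequality count, the projection argument, and in particular the recession-cone treatment of the boundary cases $\lambda\in\{0,1\}$ (which is exactly where boundedness of the lifts $Q_1,Q_2$ is needed, and the reason the statement can fail for unbounded polyhedra) are all handled properly. Nothing further is required.
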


\subsection{Online Turing machines}
In this article we would be interested in \emph{online} variants of Turing machines. Informally speaking, these machines have access to two tapes: an input tape where the head can only move from left to right (or stay put where it is) and a work tape where the work head can move freely. When the machine halts, the final state determines whether the input has been accepted or not. Such machines - like usual Turing machines - can be either deterministic or non-deterministic. For a non-deterministic machine accepting a binary language $\LL$ we require that if $\xx\notin \LL$ then the machine rejects $\xx$ for all possible non-deterministic choices, and if $\xx\in \LL$ then there is some set of non-deterministic choices that make the machine accept $\LL$.

The working of an online Turing machine can be thought of as the working of an online algorithm that makes a single pass over the input and decides whether to accept or reject the input. Natural extensions allow the machine to make more than one pass over the input. 

\begin{definition}
The complexity class $\nspace{k}{s(n)}$ is the class of languages
accepted by a $k$-pass non-deterministic Turing machines using space $s(n)$. Similarly, the complexity class $\dspace{k}{s(n)}$ is the class of languages accepted by a $k$-pass deterministic Turing machine using space $s(n)$.
\end{definition}

The classes $\oneL$ and $\oneNL$ were introduced by Hartmanis, Mahaney, and Immerman \cite{HarImmMah78, HarMah81} to study weaker forms of reduction. In our terminology the class $\oneL$ would be $\dspace{1}{\log{n}}$ while the class $\oneNL$ would be $\nspace{1}{\log{n}}$. The motivation for defining these classes was that if we do not know whether $\ptime$ is different from $\np$ or not, then using a polynomial reduction may not be completely justified in saying that a problem is as hard or harder than another problem, and weaker reductions are probably more meaningful. In any case, these languages have a rich history of study. It is known that non-determinism makes one-pass machines strictly more powerful for $s(n)=\Omega(\log{n})$ \cite{Szepietowski98}.

\subsection{Glued Product of Polytopes}
Let $P_1\subseteq\mathbb{R}^{d_1+k}$ and $P_2\subseteq\mathbb{R}^{d_2+k}$ be two $0/1$ polytopes with vertices $\vertexset(P_1), \vertexset(P_2)$ respectively. The \emph{glued product} of $P_1$ and $P_2$ where the gluing is done over the last $k$ coordinates is defined to be: $$P_1\times_k P_2:=\conv\left\{\left.\begin{pmatrix}\xx\\\yy\\\zz\end{pmatrix}\in\{0,1\}^{d_1+d_2+k}\right|\begin{pmatrix}\xx\\\zz\end{pmatrix}\in\vertexset(P_1), \begin{pmatrix}\yy\\\zz\end{pmatrix}\in\vertexset(P_2)\right\}.$$

We will use the following known result about glued products.

\begin{lemma}\label{lem:glued_product_xc}\cite{Margot_thesis, CP12}
Let $P_1\subseteq\mathbb{R}^{d_1+k}$ and $P_2\subseteq\mathbb{R}^{d_2+k}$ be two $0/1$ polytopes such that the every vertex of $P_1$ and $P_2$ contains at most one nonzero coordinate entry among the $k$-coordinates used for the gluing. Then,
$$\xc(P_1\times_k P_2)\leqslant \xc(P_1)+\xc(P_2).$$
\end{lemma}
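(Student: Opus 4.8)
The plan is to realize the glued product as the \emph{fiber product}
$$R := \bigl\{(\xx,\yy,\zz) \,:\, (\xx,\zz)\in P_1,\ (\yy,\zz)\in P_2\bigr\},$$
first to bound $\xc(R)$ by a generic construction, and then to use the hypothesis on the glueing coordinates to show that in fact $R = P_1\times_k P_2$.

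\emph{Bounding $\xc(R)$.} I would fix extended formulations $P_i = \pi_i(Q_i)$ where $Q_i$ has at most $\xc(P_i)$ facets and $\pi_i$ is linear, and split $\pi_i = (\pi_i^{\mathrm{m}},\pi_i^{\mathrm{g}})$ according to the splitting of the target space into the $d_i$ own coordinates and the $k$ glueing coordinates. Forming $Q_1\times Q_2$ (which has at most $\xc(P_1)+\xc(P_2)$ facets), intersecting it with the affine subspace $\{\pi_1^{\mathrm{g}}=\pi_2^{\mathrm{g}}\}$ — intersecting with an affine subspace cannot increase the number of facets — and then projecting the result onto the coordinates $(\pi_1^{\mathrm{m}},\pi_2^{\mathrm{m}},\pi_1^{\mathrm{g}})$ yields a polytope whose image is easily checked to be exactly $R$. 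Hence $\xc(R)\le \xc(P_1)+\xc(P_2)$.

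\emph{Showing $R = P_1\times_k P_2$.} The inclusion $P_1\times_k P_2\subseteq R$ is immediate, since each generating vertex of the glued product lies in $R$ and $R$ is convex. For the converse, fix $(\xx,\yy,\zz)\in R$. By hypothesis every vertex of the $0/1$ polytope $P_1$ has glueing part either $\vzero$ or a unit vector $\ee_j$, so I partition $\vertexset(P_1)$ into classes $V_1^0,V_1^1,\dots,V_1^k$ accordingly, and likewise $\vertexset(P_2)$ into $V_2^0,\dots,V_2^k$. Writing $(\xx,\zz)$ as a convex combination of vertices of $P_1$ and letting $a_j$ be the total weight it puts on $V_1^j$, the affine independence of $\vzero,\ee_1,\dots,\ee_k$ forces $a_j=\zz_j$ for $j\ge 1$ and $a_0 = 1-\sum_l \zz_l$: the class weights depend on $\zz$ only. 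Doing the same with $(\yy,\zz)\in P_2$ produces the \emph{same} weights $a_j$. Now for each $j$ with $a_j>0$, renormalizing the part of the $P_1$-combination supported on $V_1^j$ gives a point $\xx^{(j)}\in \conv\{\proj(v) : v\in V_1^j\}$, and similarly a point $\yy^{(j)}$ coming from $P_2$; then each $(\xx^{(j)},\yy^{(j)},\ee_j)$ (with the convention $\ee_0:=\vzero$) is a convex combination of generating vertices of $P_1\times_k P_2$, and $(\xx,\yy,\zz)=\sum_j a_j\,(\xx^{(j)},\yy^{(j)},\ee_j)$ displays $(\xx,\yy,\zz)$ as such a combination as well. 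Hence $R\subseteq P_1\times_k P_2$, and combining with the previous step gives $\xc(P_1\times_k P_2)=\xc(R)\le\xc(P_1)+\xc(P_2)$.

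The one genuinely load-bearing point is the last one: the ``at most one nonzero glueing coordinate per vertex'' hypothesis is precisely what makes the weight that any convex decomposition assigns to a vertex-class a function of $\zz$ alone, so that the independently chosen decompositions of $(\xx,\zz)$ and $(\yy,\zz)$ match up class by class and can be glued. Without it, $R$ may be strictly larger than $P_1\times_k P_2$ and the bound breaks; everything else is routine polytope bookkeeping with products, affine slices, and projections, none of which increase extension complexity.
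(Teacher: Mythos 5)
The paper does not prove this lemma at all: it is quoted as a known result with citations to Margot's thesis and to Conforti--Pashkovich, so there is no in-paper proof to compare against. Your argument is correct and is essentially the standard proof from those references -- realize the glued product as the fiber product $R$, bound $\xc(R)$ by taking the product of two extensions, slicing with the affine subspace equating the glueing images, and projecting, and then use the ``at most one nonzero glueing coordinate per vertex'' hypothesis to show that the class weights of any convex decomposition are determined by $\zz$, so the two decompositions can be matched class by class and $R=P_1\times_k P_2$. Both halves check out, and you correctly identify the second half as the only place the hypothesis is used.
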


\section{Polytopes for Formal Languages}
Let $\LL\subseteq\{0,1\}^*$ be a language over the $0/1$ alphabet. 
For every natural number $n$ define the set $\LL(n):=\left\{\xx\in\{0,1\}^n\mid
\xx\in\LL\right\}$. Viewing each string $\xx\in \LL(n)$ as a column vector, and
ordering the strings lexicographically, we can view the set $\LL(n)$ as a matrix
of size $n\times|\LL(n)|$. Thus we are in a position to naturally associate a
family of polytopes with a given language and the extension complexity of these
polytopes can serve as a natural measure of how hard is it to model these
languages as Linear Programs.

That is, one can associate with $\LL,$ the family of polytopes
$\mathcal{P}(\LL)=\{P(\LL(1)),P(\LL(2)),\ldots\}$ with $P(\LL(n)):=\conv\{\xx~|~\xx\in\LL(n)\}$. The 
extension complexity $\xc(\mathcal{P}(\LL))$ can then be defined as a function such that $\xc(\mathcal{P}(\LL)) (n)=\xc(P(\LL(n))).$
The
extension complexity $\xc(\mathcal{P}(\LL))$, as a function of $n$, is then an intrinsic measure of complexity
of the language $\LL$.

\subsection*{Extension complexity of Languages}
\begin{definition}
The \emph{extension complexity} of a language $\LL$ -- denoted by $\xc(\LL)$ --
is defined by $\xc(\LL):=\xc(\mathcal{P}(\LL))$.
\end{definition}

We say that the \emph{extension complexity} of $\LL$, denoted by $\xc(\LL)$ is $\ff(n)$, where $
\ff:\mathbb{N}\to\mathbb{R}_+$ is a non-negative function on 
natural numbers, if for every polytope  $P(\LL(n))\in \mathcal{P}(\LL)$ we have 
that  $\xc(P(\LL(n))) = \ff(n).$ One can immediately see that this definition is 
rather useless in its present form since for different values of $n$, the 
corresponding polytopes in $\mathcal{P}(\LL)$ may have extension complexities 
that are not well described by a simple function. For example, the perfect matching polytope
 would have no strings of length $n$ if $n$ is 
not of the form $\binom{r}{2}$ for some even positive integer $r$. To avoid 
such trivially pathological problems, we will use asymptotic notation to
describe 
the membership extension complexity of languages. 

We will say that $\xc(\LL)=\Oh(\ff)$ to mean that there exists a constant $c>0$ 
and a natural number $n_0$ such that for every polytope $P(\LL(n))\in\mathcal{P}(\LL)$ 
with $n\geqslant n_0$ we have $\xc(P(\LL(n))) \leqslant c\ff(n).$

We will say that $\xc(\LL)=\Omega(\ff)$ to mean that there exists a constant $c>0$ 
such that for every natural number $n_0$ there exists an $n\geqslant n_ 0 $ 
such that $\xc(P(\LL(n))) \geqslant c\ff(n).$ Note the slight difference 
from the usual $\Omega$ notation used in asymptotic analysis of algorithms.\footnote{This usage, however, is common among number theorists.} The intent 
here is to be able to say that a polytope family of a certain language  
contains an infinite family of polytopes that have high extension complexity. 

Finally, we will say that $\xc(\LL)=\Theta(\ff)$ to mean that $\xc(\LL)=\Oh(\ff)$ as 
well as $\xc(\LL)=\Omega(\ff).$ To give an example of the notation, the recent 
result of Rothvo\ss ~\cite{stoc/Rothvoss14} proving that perfect matching polytope has high 
extension complexity would translate in our setting to the following statement.

\begin{theorem*} {\textbf{\cite{stoc/Rothvoss14}}} 
Let $\LL$ be the language consisting of the characteristic vectors of perfect 
matchings of complete graphs. Then, there exists a constant $c>1$ such that $
\xc(\LL)=\Omega(c^n).$
\end{theorem*}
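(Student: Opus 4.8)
The plan is to recognise that, once the vocabulary of the preceding discussion is unwound, the statement is nothing other than Rothvo{\ss}'s theorem \cite{stoc/Rothvoss14} about the perfect matching polytope, so the entire content of the proof is the verification that the polytope family $\mathcal{P}(\LL)$ does reproduce the perfect matching polytopes. Accordingly, the first step is to set up the dictionary: for an even positive integer $r$, write $P_{\PM}(K_r)$ for the perfect matching polytope of the complete graph $K_r$, that is, the convex hull of the characteristic vectors --- indexed by the $\binom{r}{2}$ edges of $K_r$ --- of the perfect matchings of $K_r$. Unwinding the definition of $\LL$ in the statement, a string $\xx\in\{0,1\}^n$ belongs to $\LL$ exactly when $n=\binom{r}{2}$ for some even $r$ and $\xx$ is the characteristic vector of a perfect matching of $K_r$; hence $\LL(n)=\vertexset(P_{\PM}(K_r))$ for such $n$, and consequently $P(\LL(n))=\conv\{\xx\mid\xx\in\LL(n)\}=P_{\PM}(K_r)$, while $\LL(n)=\emptyset$ for every other $n$.

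The second step is to observe that the sparsity of the non-trivial lengths is harmless: by the asymptotic-$\Omega$ convention adopted above, to establish $\xc(\LL)=\Omega(\ff)$ it is enough to produce a single constant $c>0$ together with infinitely many lengths $n$ for which $\xc(P(\LL(n)))\geqslant c\,\ff(n)$. I would therefore simply restrict attention to the lengths $n=\binom{r}{2}$ with $r$ a large even integer and ignore all others.

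The third step is to feed in Rothvo{\ss}'s lower bound: there is a constant $\alpha>0$ such that $\xc\bigl(P_{\PM}(K_r)\bigr)\geqslant 2^{\alpha r}$ for every sufficiently large even $r$. Combined with the identity $P(\LL(\binom{r}{2}))=P_{\PM}(K_r)$ from the first step, this gives $\xc(P(\LL(n)))\geqslant 2^{\alpha r}$ along the chosen sequence of lengths; re-expressing $r$ through the relation $n=\binom{r}{2}$ (so that $r=\Theta(\sqrt{n})$ along that sequence) turns this into the exponential lower bound on $\xc(P(\LL(n)))$ asserted by the theorem.

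I do not anticipate any real obstacle: the argument is a translation exercise and contributes no new idea beyond Rothvo{\ss}'s. The two points that do need attention are both routine bookkeeping. First, one must make sure that the characteristic-vector encoding used to define $\LL$ is the same edge-indexed encoding for which Rothvo{\ss}'s bound is phrased, so that $P(\LL(n))$ and $P_{\PM}(K_r)$ are literally the same polytope (an affine isomorphism would already suffice for extension complexity, but the cleanest statement is the equality). Second, one must keep straight the passage between the vertex-count scale $r$, in which Rothvo{\ss}'s bound is most naturally read, and the string-length scale $n=\binom{r}{2}$ in which $\xc(\LL)$ is measured, and state the resulting exponent in whichever of the two scales is intended.
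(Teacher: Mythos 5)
The paper offers no proof of this statement---it is presented purely as a restatement of Rothvo{\ss}'s cited theorem in the language-polytope notation---and your proposal is exactly the definitional unwinding that justifies that restatement, so you are taking the same (and only possible) approach. Your closing caveat about the two scales is well taken and worth keeping: as the paper's later remark about the $2^{n/2}$ upper bound confirms, the $n$ in $\Omega(c^n)$ must be read as the number of vertices of the complete graph rather than the string length $\binom{n}{2}$, since in string-length scale the true growth is $2^{\Theta(\sqrt{n})}$ and the literal claim $\Omega(c^n)$ would fail.
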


One can extend the above notation to provide more information by being able 
to use functions described by asymptotic notation as well. For example, 
knowing that the perfect matching polytope for $K_n$ has extension complexity 
at most ${{2}}^\frac{n}{2}$ \cite{mp/FaenzaFGT15} together with Rothvo\ss' result one could 
say that the language of all perfect matchings of complete graphs has 
extension complexity $2^{\Theta(n)}.$ 

\begin{proposition} 
For every language $\LL\subseteq\{0,1\}^*$ we have $\xc(\LL) \leqslant \xc(\LL)+\xc(\overline{\LL})\leqslant 2^{n}.$
\end{proposition}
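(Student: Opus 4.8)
The two displayed inequalities are of quite different character: the left one, $\xc(\LL)\leqslant\xc(\LL)+\xc(\overline{\LL})$, is merely the non-negativity of extension complexity and needs no argument, so the whole content lies in the right-hand bound $\xc(\LL)+\xc(\overline{\LL})\leqslant 2^n$, read slicewise as $\xc(P(\LL(n)))+\xc(P(\overline{\LL}(n)))\leqslant 2^n$ for each $n$. The plan is to bound each summand by the number of vertices of the corresponding polytope and then to observe that, because $\LL(n)$ and $\overline{\LL}(n)$ partition the Boolean cube $\{0,1\}^n$, these two vertex counts add up to exactly $2^n$.

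First I would record the (standard) fact that $\xc(P)\leqslant\abs{\vertexset(P)}$ for every polytope $P$. Indeed, if $\vertexset(P)=\{\vct{v}_1,\dots,\vct{v}_m\}\subseteq\RR^d$, then $P$ is the image, under the linear map $\bm\lambda\mapsto\sum_{i=1}^m\lambda_i\vct{v}_i$, of the set $Q=\{\bm\lambda\in\RR^m\mid\bm\lambda\geqslant\vzero,\ \sum_{i=1}^m\lambda_i=1\}$; since $Q$ is carved out by the $m$ inequalities $\lambda_i\geqslant 0$ (together with one equation), it is an extended formulation of $P$ of size $m$, whence $\xc(P)\leqslant m$. (Invoking Theorem~\ref{thm:xc_union} repeatedly instead would only give the weaker bound of roughly $3\abs{\vertexset(P)}$, which is not enough to close the argument below.)

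Now fix $n$. Every vertex of $P(\LL(n))=\conv\{\xx\mid\xx\in\LL(n)\}$ is one of the points of $\LL(n)$, so $\abs{\vertexset(P(\LL(n)))}\leqslant\abs{\LL(n)}$, and likewise $\abs{\vertexset(P(\overline{\LL}(n)))}\leqslant\abs{\overline{\LL}(n)}$. Combining this with the vertex bound and with $\abs{\LL(n)}+\abs{\overline{\LL}(n)}=\abs{\{0,1\}^n}=2^n$ gives $\xc(P(\LL(n)))+\xc(P(\overline{\LL}(n)))\leqslant\abs{\LL(n)}+\abs{\overline{\LL}(n)}=2^n$, and the left inequality follows because $\xc(P(\overline{\LL}(n)))\geqslant 0$. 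Since this holds for every $n$, it holds in the functional sense introduced above.

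I do not expect a genuine obstacle; the argument is a counting one, and the only reason a tight constant matters is precisely that $\abs{\LL(n)}+\abs{\overline{\LL}(n)}$ equals $2^n$ on the nose. The one place that calls for a word of care is the degenerate slices, where $\LL(n)$ or $\overline{\LL}(n)$ is empty: there the corresponding polytope is empty (or, once the paper's full-dimensionality convention is set aside, possibly lower-dimensional), and one either adopts the harmless convention $\xc(\emptyset)=0$ or simply notes that these are among the trivially pathological cases already dismissed, after which the inequalities persist unchanged.
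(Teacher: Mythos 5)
Your argument is correct and matches the paper's own proof, which likewise dismisses the first inequality as trivial and bounds the second by observing that $\LL(n)$ and $\overline{\LL}(n)$ together have at most $2^n$ vertices. You merely make explicit the standard fact $\xc(P)\leqslant\abs{\vertexset(P)}$ that the paper leaves implicit.
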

\begin{proof}
The first inequality is trivial. For the last inequality, observe that $\LL(n)$ and $\overline{\LL}(n)$ has at most $2^{n}$ strings altogether which become the vertices of the polytopes.
\end{proof}

\section{Languages with small extension complexities}

Now we are ready to define the class of languages that we are interested in: 
namely, the languages that have small extension complexities. 

\begin{definition}
$\cf$ is the class of languages admitting $\mathcal{C}$ompact extended
$\mathcal{F}$ormulations and is defined as  
\begin{eqnarray*}
\mathcal{CF}&=&\left\{\LL\subseteq \{0,1\}^* ~|~ \exists c>0 \text{ s.t. }
\xc(\LL)\leqslant
n^c\right\}.
\end{eqnarray*}
\end{definition}

For example, regular languages form a subset of the class $\cf$. That is, if $\LL$ is a regular language, then $\LL\in\cf$ \cite{mp/FioriniP15}.

\subsection{Some canonical examples}
For any given boolean formula $\bm{\varphi}$ with $n$ variables define the polytope $\SAT(\bm{\varphi})$ to be the convex hull of all satisfying assignments and $\UNSAT(\bm{\varphi})$ to be the convex hull of all non-satisfying assignments. That is, 
\begin{eqnarray*}
\SAT(\bm{\varphi})&:= &\conv(\{\xx\in\{0,1\}^n\mid \bm{\varphi}(\xx)=1 \}),\\ 
\UNSAT(\bm{\varphi})&:= &\conv(\{\xx\in\{0,1\}^n\mid \bm{\varphi}(\xx)=0 \}).
\end{eqnarray*}

Let $n\in\mathbb{N}$ and $m=n^2.$ For the complete graph $K_n$ define a 3SAT boolean formula $\bm{\varphi}_m$ such that $\CutP(K_n)$ -- the convex hull of all edge-cuts of the complete graph $K_n$ -- is a projection of $\SAT(\bm{\varphi}_m)$ as follows. Consider the relation $\xx_{ij}=\xx_{ii}\xor \xx_{jj},$ where $\xor$ is the xor operator. The boolean formula $$(\xx_{ii}\lor\overline{\xx}_{jj}\lor \xx_{ij})\land(\overline{\xx}_{ii}\lor{\xx_{jj}}\lor {\xx_{ij}})\land({\xx_{ii}}\lor{\xx_{jj}}\lor \overline{\xx}_{ij})\land(\overline{\xx}_{ii}\lor\overline{\xx}_{jj}\lor \overline{\xx}_{ij})$$ is true if and only if $\xx_{ij}=\xx_{ii}\xor \xx_{jj}$ for any assignment of the variables $\xx_{ii},\xx_{jj}$ and $\xx_{ij}.$

Therefore we define $\bm{\varphi}_m$ (with $m=n^2$) as 
\begin{eqnarray}\label{eqn:cnf4cut}\bm{\varphi}_m := \bigwedge_{i,j \in [n] \atop i \neq j} \left[ \begin{array}{l}(\xx_{ii}\lor\overline{\xx}_{jj}\lor \xx_{ij})\land(\overline{\xx}_{ii}\lor{\xx_{jj}}\lor {\xx_{ij}})\land\\({\xx_{ii}}\lor{\xx_{jj}}\lor \overline{\xx}_{ij})\land(\overline{\xx}_{ii}\lor\overline{\xx}_{jj}\lor \overline{\xx}_{ij})\end{array} \right].
\end{eqnarray}

We will call the family of $\cnf$ formulae defined by (\ref{eqn:cnf4cut}) to be the $\cutsat$ family. It is easy to see the following. 

\begin{lemma}\label{lem:xc_sat}
$\xc(\SAT(\bm{\varphi}_m))=2^{\Omega(n)},$ where $m=n^2.$
\end{lemma}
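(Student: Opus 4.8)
The plan is to read off, from the construction in the paragraph preceding the lemma, that $\CutP(K_n)$ is a coordinate projection of $\SAT(\bm{\varphi}_m)$, and then to combine this with the known exponential lower bound on $\xc(\CutP(K_n))$.

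First I would make the projection completely explicit. The variables of $\bm{\varphi}_m$ split into the diagonal variables $\xx_{ii}$ for $i\in[n]$ and the off-diagonal variables $\xx_{ij}$ for $i\neq j$. As observed in the text, a $0/1$ assignment satisfies $\bm{\varphi}_m$ if and only if $\xx_{ij}=\xx_{ii}\xor\xx_{jj}$ for every pair $i\neq j$. Setting $S=\{i\in[n] : \xx_{ii}=1\}$, this condition says exactly that the vector $(\xx_{ij})_{i\neq j}$ is the incidence vector of the edge cut $\delta(S)$ of $K_n$. Hence the vertex set of $\SAT(\bm{\varphi}_m)$ is $\{(\chi_S,\chi_{\delta(S)}) : S\subseteq[n]\}$, and the linear map that deletes the diagonal coordinates sends this vertex set onto the vertex set of $\CutP(K_n)$. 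Therefore $\CutP(K_n)$ is the image of $\SAT(\bm{\varphi}_m)$ under a linear projection.

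Second, I would use the elementary monotonicity of extension complexity under projection: if a polytope $Q$ projects linearly onto a polytope $P$, then composing any extended formulation of $Q$ with that projection yields an extended formulation of $P$, so $\xc(P)\leqslant\xc(Q)$. Applied here this gives $\xc(\SAT(\bm{\varphi}_m))\geqslant\xc(\CutP(K_n))$. Finally I would invoke the lower bound of Fiorini et al.~\cite{jacm/FioriniMPTW15}, namely $\xc(\CutP(K_n))=2^{\Omega(n)}$, and combine the two estimates; since $m=n^2$, this is precisely the claimed bound $\xc(\SAT(\bm{\varphi}_m))=2^{\Omega(n)}$.

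There is essentially no real obstacle: the only step with content is the cited lower bound for the cut polytope, and everything else is bookkeeping. The one point I would flag in passing is the full-dimensionality convention of the paper: $\SAT(\bm{\varphi}_m)$ sits in an ambient space of dimension $n+\binom{n}{2}$, larger than strictly necessary, but this is harmless for the ``size equals number of facets'' definition and does not affect the projection argument, so I would mention it only briefly (consistently with the paper's stated intent to skip such finer points).
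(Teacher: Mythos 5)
Your argument is correct and is essentially the paper's own proof, just written out in more detail: the paper likewise observes that the satisfying assignments of $\bm{\varphi}_m$ restricted to the off-diagonal variables are exactly the cut vectors of $K_n$ (and that every cut vector extends to a satisfying assignment), and then implicitly invokes monotonicity of extension complexity under projection together with the known lower bound for $\CutP(K_n)$. No substantive difference to report.
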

\begin{proof}
The satisfying assignments of $\bm{\varphi}_m$ when restricted to the variables $\xx_{ij}$ with $i\neq j$ are exactly the cut vectors of $K_n$ and every cut vector of $K_n$ can be extended to a satisfying assignment of $\bm{\varphi}$. Therefore $\CutP(K_n)$ is a projection of $\SAT(\bm{\varphi}_m)$. The result then follows from the fact that the extension complexity of the cut polytope $\CutP(K_n)$ is $2^{\Omega(n)}$ \cite{jacm/FioriniMPTW15}.
\end{proof}

\begin{lemma}\label{lem:xc_unsat}
$\xc(\UNSAT(\bm{\varphi}_m))\leqslant \Oh(n^4).$
\end{lemma}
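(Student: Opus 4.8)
The plan is to write $\UNSAT(\bm{\varphi}_m)$ as the convex hull of a union of only $O(n^2)$ extremely simple polytopes, each of extension complexity $O(n^2)$, and then to apply Balas' theorem (Theorem~\ref{thm:xc_union}) repeatedly. The key observation is that a $0/1$ vector $\xx\in\{0,1\}^{n^2}$ fails to satisfy $\bm{\varphi}_m$ exactly when it falsifies at least one of the $4n(n-1)$ clauses of $\bm{\varphi}_m$, and that each clause is a disjunction of three literals over the three variables $\xx_{ii},\xx_{jj},\xx_{ij}$ for a fixed ordered pair $i\neq j$. Hence falsifying a given clause pins the triple $(\xx_{ii},\xx_{jj},\xx_{ij})$ to one specific pattern in $\{0,1\}^3$ (the four clauses of the pair $(i,j)$ correspond to the four patterns with $\xx_{ij}\neq\xx_{ii}\xor\xx_{jj}$, namely $(0,1,0),(1,0,0),(0,0,1),(1,1,1)$) while leaving the other $n^2-3$ coordinates completely free.

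Concretely, for each clause $C$ of $\bm{\varphi}_m$ let $V_C\subseteq\{0,1\}^{n^2}$ be the set of $0/1$ vectors falsifying $C$ and set $P_C:=\conv(V_C)$. Then $V_C$ is the product of a single point of $\{0,1\}^3$ (in the coordinates $\xx_{ii},\xx_{jj},\xx_{ij}$) with the full cube $\{0,1\}^{n^2-3}$ (in the remaining coordinates), so $P_C$ is an affine image of the $(n^2-3)$-dimensional $0/1$ cube and therefore $\xc(P_C)\leqslant 2(n^2-3)=O(n^2)$. By construction the vertices of $\UNSAT(\bm{\varphi}_m)$ are precisely the $0/1$ vectors that do not satisfy $\bm{\varphi}_m$, which is exactly $\bigcup_C V_C$, and conversely every $0/1$ point of every $V_C$ falsifies a clause and hence is unsatisfying; so $\UNSAT(\bm{\varphi}_m)=\conv\big(\bigcup_C P_C\big)$.

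Applying Theorem~\ref{thm:xc_union} $t-1$ times to merge the $t:=4n(n-1)$ polytopes $P_C$ one at a time gives
$$\xc\big(\UNSAT(\bm{\varphi}_m)\big)=\xc\Big(\conv\big(\textstyle\bigcup_C P_C\big)\Big)\leqslant \sum_C \xc(P_C)+2(t-1)\leqslant t\cdot O(n^2)+2t=O(n^4),$$
which is the claimed bound. (If one prefers fewer pieces, one can instead group the four clauses of each pair $(i,j)$ together: the convex hull of the four falsifying patterns in $\{0,1\}^3$ is a tetrahedron with only four facets, so one gets $n(n-1)$ pieces of complexity $O(n^2)$ each, and the same $O(n^4)$ conclusion.)

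I do not expect a real obstacle here; the only point requiring a moment's care is the identity $\UNSAT(\bm{\varphi}_m)=\conv\big(\bigcup_C P_C\big)$, which follows because the extreme points of the right-hand side lie among the $0/1$ points $\bigcup_C V_C$ and this set is precisely the complement (inside $\{0,1\}^{n^2}$) of the satisfying assignments of $\bm{\varphi}_m$. The remaining ingredients — that an affine $d$-cube has $2d$ facets, and the iterated form of Balas' bound — are routine.
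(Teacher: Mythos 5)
Your proposal is correct and is essentially the paper's own argument: the paper observes that $\UNSAT(\bm{\varphi}_m)=\SAT(\neg\bm{\varphi}_m)$ where $\neg\bm{\varphi}_m$ is a DNF with $O(n^2)$ terms, each term's satisfying set being a face of the hypercube with extension complexity $O(n^2)$, and then applies Theorem~\ref{thm:xc_union} repeatedly --- which is word for word your decomposition into the sets $V_C$ of assignments falsifying each clause. The only difference is phrasing (falsifying a CNF clause versus satisfying the corresponding DNF term), so there is nothing substantive to distinguish the two proofs.
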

\begin{proof}
Let $\bm{\varphi}$ be a DNF formula with $n$ variables and $m$ clauses/terms. We can show that $\xc(\SAT(\bm{\varphi}))\leqslant \Oh(mn).$

If $\bm{\varphi}$ consists of a single clause then it is just a conjunction of some literals. In this case $\SAT(\bm{\varphi})$ is a face of the $n$-hypercube and has $\xc(\SAT(\bm{\varphi}))\leqslant 2n.$ Furthermore, for DNF formulae $\bm{\varphi}_1,\bm{\varphi}_2$ we have that $\SAT(\bm{\varphi}_1\lor\bm{\varphi}_2)=\SAT(\bm{\varphi}_1)\uch\SAT(\bm{\varphi}_2)$. Therefore, using Theorem \ref{thm:xc_union} repeatedly we obtain that for a DNF formula $\bm{\varphi}$ with $n$ variables and $m$ clauses/terms $\SAT(\bm{\varphi})\leqslant \Oh(mn).$
\end{proof}

\section{Closure properties of compact languages}\label{sec:closure}
{
Now we discuss the closure properties of the class $\cf$ with respect to some 
common operations on formal languages. The operations that we consider are as follows.

$\begin{array}{lcl}
\bullet~\text{\textbf{Complement}} & \text{\textbf{:}} & \overline{\LL}=\{\xx
\mid \xx\notin \LL\} \\
\bullet~\text{\textbf{Union}} & \text{\textbf{:}} & \LL_1\cup \LL_2=\{\xx\mid
\xx\in \LL_1 \vee \xx\in \LL_2\}\\
\bullet~\text{\textbf{Intersection}} & \text{\textbf{:}} & \LL_1\cap
\LL_2=\{\xx\mid \xx\in \LL_1 \wedge \xx\in \LL_2\}\\
\bullet~\text{\textbf{Set difference}} & \text{\textbf{:}} & \LL_1\setminus
\LL_2=\{\xx\mid \xx\in \LL_1 \wedge \xx\notin \LL_2\}\\
\bullet~\text{\textbf{Concatenation}} & \text{\textbf{:}} &
\LL_1\LL_2=\{\xx\yy\mid \xx\in \LL_1 \wedge \yy\in \LL_2\}\\
\bullet~\text{\textbf{Kleene star}} & \text{\textbf{:}} & \LL^{*}=\{\varepsilon\}\cup\LL\cup
\LL\LL\cup \LL\LL\LL \cup \LL\LL\LL\LL \cup \ldots
\end{array}$

\begin{theorem}\label{thm:closure_complement}
$\cf$ is not closed under complement.
\end{theorem}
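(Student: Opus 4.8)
The statement to prove is that $\cf$ is not closed under complement. The natural approach is to exhibit a single language $\LL$ with $\xc(\LL) = n^{\Oh(1)}$ whose complement $\overline{\LL}$ has superpolynomial extension complexity. The excerpt has essentially handed us the right candidate: the $\cutsat$ family and Lemmas~\ref{lem:xc_sat} and~\ref{lem:xc_unsat}. So the plan is to package a suitable encoding of satisfying/non-satisfying assignments of the $\bm{\varphi}_m$ formulae into one language over $\{0,1\}^*$ and transfer those two lemmas.

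First I would fix the dimension bookkeeping. For each $n$ set $m = n^2$ and let $N = N(n)$ be the number of boolean variables appearing in $\bm{\varphi}_m$ (the variables $\xx_{ii}$ and $\xx_{ij}$, so $N = \Theta(n^2)$). Define $\LL$ to be the set of strings of length $N(n)$, over all $n$, that are \emph{non-satisfying} assignments of $\bm{\varphi}_{n^2}$ — i.e. for the unique $n$ with $N(n)$ equal to the input length, the string lies in $\UNSAT(\bm{\varphi}_{n^2})$ — together with the convention that strings whose length is not of the form $N(n)$ are all placed in $\LL$ (or all excluded; either choice is fine, since such lengths contribute polytopes that are faces of a hypercube and hence have $\xc \le 2n$). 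By Lemma~\ref{lem:xc_unsat}, $\xc(\UNSAT(\bm{\varphi}_{n^2})) = \Oh(n^4) = \Oh(N^2)$, so $\xc(\LL) = n^{\Oh(1)}$ and $\LL \in \cf$. On the other hand $\overline{\LL}(N(n))$ is exactly the set of satisfying assignments of $\bm{\varphi}_{n^2}$, so $P(\overline{\LL}(N(n))) = \SAT(\bm{\varphi}_{n^2})$, and Lemma~\ref{lem:xc_sat} gives $\xc(\SAT(\bm{\varphi}_{n^2})) = 2^{\Omega(n)} = 2^{\Omega(\sqrt{N})}$, which is superpolynomial in $N$. Hence $\overline{\LL} \notin \cf$, and since $\LL \in \cf$ this proves $\cf$ is not closed under complement.

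One subtlety I would check carefully is the interaction between the two notions of "length." The polytopes in $\mathcal{P}(\LL)$ are indexed by string length $N$, while the lower bound of Lemma~\ref{lem:xc_sat} is naturally stated in $n$; the translation $N = \Theta(n^2)$ turns $2^{\Omega(n)}$ into $2^{\Omega(\sqrt N)}$, which is still not bounded by any polynomial $N^c$, so the membership failure $\overline{\LL}\notin\cf$ is genuine. I would also note that only infinitely many lengths $N$ need to exhibit the large bound for the $\Omega$-notion defined in the excerpt, which is automatic here.

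The main obstacle — really the only one — is making sure the single language $\LL$ is well-defined as a subset of $\{0,1\}^*$ rather than a per-formula object: I must pick, for each admissible length, exactly one formula $\bm{\varphi}_{n^2}$ and handle inadmissible lengths explicitly, so that $\LL$ and $\overline{\LL}$ are unambiguous and the polytope families $\mathcal{P}(\LL)$, $\mathcal{P}(\overline{\LL})$ are exactly the $\UNSAT$ and $\SAT$ polytopes on those lengths. Once that encoding convention is pinned down, the result is an immediate consequence of the two preceding lemmas together with the observation that faces of the hypercube have linear extension complexity.
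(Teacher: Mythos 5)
Your proposal is correct and follows essentially the same route as the paper: the paper also builds a single language from the $\cutsat$ formulae (padding the inadmissible lengths with tautologies, which plays the same role as your convention for lengths not of the form $N(n)$), takes $\LL$ to be the non-satisfying assignments, and invokes Lemmas~\ref{lem:xc_unsat} and~\ref{lem:xc_sat} for the two sides. Your extra care with the $N=\Theta(n^2)$ versus $n$ bookkeeping is a sound (and slightly more explicit) version of what the paper leaves implicit.
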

\begin{proof}
Let $\bm{\Phi}$ be the family of 3CNF formulae containing $\cutsat$ formulae for $m=n^2$ and containing some tautologies with $m$ variables for all other $m$. Let $\LL_{\mathrm{sat}}$ be the language containing the satisfying assignments of the formulae in this family. Similarly, let $\LL_{\mathrm{unsat}}$ be the language containing the non-satisfying assignments of the formulae in this family.

It is easy to see that $\LL_{\mathrm{sat}}=\overline{\LL}_{\mathrm{unsat}}$. Now, $\LL_{\mathrm{unsat}}\in\cf$ due to Lemma \ref{lem:xc_unsat} while $\LL_{\mathrm{sat}}\notin\cf$ due to Lemma \ref{lem:xc_sat}.
\end{proof}

\begin{theorem}\label{thm:closure_union}
$\cf$ is closed under union.
\end{theorem}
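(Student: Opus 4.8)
The plan is to reduce the statement to Balas's bound for the convex hull of a union (Theorem~\ref{thm:xc_union}), applied slice by slice. Let $\LL_1,\LL_2\in\cf$ and fix $n\in\mathbb{N}$. Since membership of a length-$n$ string in $\LL_1\cup\LL_2$ depends only on its membership in $\LL_1$ and in $\LL_2$, we have the set identity $(\LL_1\cup\LL_2)(n)=\LL_1(n)\cup\LL_2(n)$ inside $\{0,1\}^n$. Passing to convex hulls and using the elementary fact that $\conv(A\cup B)=\conv(\conv A\cup\conv B)$ for finite point sets $A,B$, this gives
\[
P\big((\LL_1\cup\LL_2)(n)\big)=\conv\big(P(\LL_1(n))\uch P(\LL_2(n))\big).
\]
If exactly one of $\LL_i(n)$ is empty the right-hand side degenerates to the other polytope, and if both are empty there is nothing to prove; I would dispose of these cases in one line.

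Next I would invoke Theorem~\ref{thm:xc_union} to obtain
\[
\xc\big(P((\LL_1\cup\LL_2)(n))\big)\leqslant\xc(P(\LL_1(n)))+\xc(P(\LL_2(n)))+2 .
\]
By hypothesis there are constants $c_1,c_2>0$ (which we may take $\geqslant 1$) with $\xc(P(\LL_i(n)))\leqslant n^{c_i}$ for $i=1,2$, so the right-hand side is at most $n^{c_1}+n^{c_2}+2$. Setting $c:=\max\{c_1,c_2\}+1$, a one-line estimate shows $n^{c_1}+n^{c_2}+2\leqslant n^{c}$ for all sufficiently large $n$, and the finitely many remaining small values of $n$ are either absorbed by enlarging $c$ or are immaterial under the asymptotic reading of ``$\xc(\LL)\leqslant n^{c}$''. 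Hence $\xc(\LL_1\cup\LL_2)\leqslant n^{c}$ and $\LL_1\cup\LL_2\in\cf$.

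There is no genuinely hard step here: the only things to be careful about are the degenerate empty slices and the harmless slack between the exact bound $n^{c_1}+n^{c_2}+2$ and a single clean polynomial $n^{c}$. The same argument, applied inductively, shows that $\cf$ is closed under finite unions, with the additive constant $2$ from Balas's bound accumulating linearly in the number of languages combined; I would note this in a remark.
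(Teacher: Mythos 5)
Your proof is correct and follows exactly the paper's route: apply Balas's bound (Theorem~\ref{thm:xc_union}) to each length-$n$ slice and absorb the resulting $n^{c_1}+n^{c_2}+2$ into a single polynomial. The paper's own proof is a one-liner doing precisely this; your version merely spells out the degenerate cases and the asymptotic bookkeeping.
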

\begin{proof}
Let $\LL_1$ and $\LL_2$ be two languages. Then, $\xc(\LL_1\cup\LL_2)\leqslant
\xc(\LL_1)+\xc(\LL_2)+2$ (cf. Theorem \ref{thm:xc_union}).
\end{proof}

\begin{theorem}\label{thm:closure_intersection}
$\cf$ is not closed under intersection.
\end{theorem}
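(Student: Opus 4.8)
The plan is to mimic the structure of the complement proof (Theorem \ref{thm:closure_complement}), writing the non-compact language $\LL_{\mathrm{sat}}$ of the $\cutsat$ family as an intersection of two languages that are each in $\cf$. The natural candidate is to express ``$\xx$ is a satisfying assignment of $\bm{\varphi}_m$'' as a conjunction over the clauses of $\bm{\varphi}_m$, and then to split that conjunction into two halves, each of which is a $3$CNF with a bounded-width structure whose $\SAT$ polytope is small. Concretely, for the $\cutsat$ formula one may group the four clauses attached to each pair $\{i,j\}$ into two pairs; each pair of clauses, viewed as a formula, has a satisfying-assignment polytope that is the intersection of the hypercube with a constant number of facets per pair, hence extension complexity $\Oh(n^2)$ by a routine union-over-clauses argument as in Lemma \ref{lem:xc_unsat} applied to the complements (DNF of the negation has polynomially many terms). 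Taking $\LL_1$ (resp. $\LL_2$) to be the language of assignments satisfying the first (resp. second) half over all $m = n^2$, and padding with tautologies for other lengths exactly as in Theorem \ref{thm:closure_complement}, gives $\LL_1, \LL_2 \in \cf$ while $\LL_1 \cap \LL_2 = \LL_{\mathrm{sat}} \notin \cf$ by Lemma \ref{lem:xc_sat}.

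First I would make precise the decomposition of $\bm{\varphi}_m$ into $\bm{\varphi}_m^{(1)} \wedge \bm{\varphi}_m^{(2)}$, check that each $\bm{\varphi}_m^{(\ell)}$ is still a $3$CNF, and bound $\xc(\SAT(\bm{\varphi}_m^{(\ell)}))$. The cleanest route here is: the negation $\overline{\bm{\varphi}_m^{(\ell)}}$ is a disjunction of the negated clauses, i.e.\ a DNF with $\Oh(n^2)$ terms of width $3$, so $\xc(\UNSAT(\bm{\varphi}_m^{(\ell)})) = \xc(\SAT(\overline{\bm{\varphi}_m^{(\ell)}})) = \Oh(n^2)$ by the DNF argument of Lemma \ref{lem:xc_unsat}; but what we actually need is $\xc(\SAT(\bm{\varphi}_m^{(\ell)}))$ small, which does \emph{not} follow from the complement having small $\xc$. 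So the decomposition has to be chosen so that $\SAT(\bm{\varphi}_m^{(\ell)})$ itself has a small formulation — for instance by splitting so that each half is a $2$CNF (a conjunction of width-$\leqslant 2$ clauses), since the $\SAT$ polytope of a $2$CNF, after a change of variables matching literals, is again an intersection of a polynomial number of hypercube-faces and thin slabs and has polynomial $\xc$.

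Next I would set up the padded families $\bm{\Phi}_1, \bm{\Phi}_2$ and the corresponding languages $\LL_1, \LL_2$, verify $\LL_1 \cap \LL_2 = \LL_{\mathrm{sat}}$ lengthwise (equal lengths force the same $m$, and for $m = n^2$ the two halves reconstruct $\bm{\varphi}_m$; for other lengths both sides are the full cube), and invoke Lemma \ref{lem:xc_sat} to conclude $\LL_1 \cap \LL_2 \notin \cf$. The main obstacle — and the step deserving the most care — is the first one: arranging the split of $\bm{\varphi}_m$ so that \emph{both} halves have polynomial $\SAT$-extension complexity, rather than merely polynomial $\UNSAT$-complexity. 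The four clauses per pair $\{i,j\}$ together pin down $\xx_{ij} = \xx_{ii} \oplus \xx_{jj}$, an XOR constraint whose solution polytope is not a face of the cube; I expect the right move is to introduce, in one of the two halves, extra coordinates (an extended formulation over a larger ground set, which is allowed since $\xc$ is defined via projections) encoding the parities directly, or to exploit that half of the clauses alone only enforce the two implications $\xx_{ii} \wedge \overline{\xx}_{jj} \Rightarrow \xx_{ij}$ etc., each a single hypercube facet — so that each half is genuinely a polynomial intersection of facets. If a clean $2$CNF split is not available for this particular formula, the fallback is to replace $\cutsat$ by any $3$CNF family $\bm{\psi}_m$ with $\xc(\SAT(\bm{\psi}_m)) = 2^{\Omega(\sqrt{m})}$ for which such a balanced low-width decomposition does exist; the argument is otherwise identical.
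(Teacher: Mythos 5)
Your plan correctly isolates the crux --- you need two languages whose \emph{SAT}-polytopes (not UNSAT-polytopes) are both compact, yet whose intersection is the $\cutsat$ language --- but it does not close that gap, and the two concrete suggestions you offer for closing it are both unsound. First, the claim that the SAT polytope of a 2CNF ``is again an intersection of a polynomial number of hypercube-faces and thin slabs and has polynomial $\xc$'' conflates the LP relaxation with the integer hull: the convex hull of the $0/1$ solutions of a 2CNF is in general \emph{not} the intersection of the clause inequalities with the cube, and indeed the stable set polytope of any graph is the SAT polytope of a 2CNF (clauses $\overline{\xx}_i\lor\overline{\xx}_j$ for each edge), which by the Fiorini et al.\ result cited in the introduction has superpolynomial extension complexity. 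The same conflation underlies the remark that each implication clause is ``a single hypercube facet'': that describes one clause in isolation, but the conjunction's integer hull can still be hard --- which is exactly the phenomenon Lemma \ref{lem:xc_sat} exhibits. Second, splitting the four clauses per pair into two halves leaves, e.g., the constraint $\xx_{ij}\geqslant\xx_{ii}\xor\xx_{jj}$ for all pairs, and you give no argument that the convex hull of those assignments is compact; nothing in the paper supplies one.

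The paper's proof sidesteps clause-splitting entirely by splitting \emph{variables} rather than clauses. It takes $\LL_1$ to be a product of $\binom{n}{2}$ constant-size XOR gadgets over \emph{disjoint} blocks of coordinates (compact because a product of compact polytopes is compact, each factor having $\xc=\Oh(n)$), and $\LL_2$ to be the ``consistency'' language forcing the duplicated coordinates to agree across blocks (an embedded hypercube, hence $\xc=\Oh(n^2)$). Their intersection identifies the duplicates and projects onto $\CutP(K_n)$, giving the $2^{\Omega(n)}$ lower bound. That duplication-plus-consistency device is the missing idea in your proposal; without it (or a genuinely new argument that some half of $\bm{\varphi}_m$ has a compact SAT polytope), the proof does not go through.
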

\begin{proof}
Let $\LL_1$ be a language such that a string $\xx\in\LL_1$ if and only if it
satisfies the following properties:
\begin{itemize}
	\item $|\xx|=(n+1)\binom{n}{2}$ for some natural number $n$, where $|\xx|$ is the number of characters in the string $\xx$, and
	\item $\xx_{ij(n+1)}=\xx_{iji}\xor\xx_{ijj}$ if the characters are
indexed as $\xx_{ijk}$ with $1\leqslant i<j\leqslant n$, $1\leqslant k\leqslant
n{+}1.$
\end{itemize}

We claim that $\xc(\LL_1)=\Oh(n)$. Indeed
$P\left(\LL_1\left(\left(n+1\right)\cdot\binom{n}{2}\right)\right)$ is the
product of polytopes
$$P_{ij}=\left\{\xx\in\{0,1\}^{n+1}~|~\xx_{n+1}=\xx_i\xor\xx_j\right\}$$ for
$1\leqslant i<j\leqslant n$ and $\xc(P_{ij})=\Oh(n)$. Therefore, $\xc(P(\LL_1(\Theta(n^3))))=\Oh(n^3)$ and
$\xc(\LL_1)=\Oh(n)$. To see that the extension
complexity of $P_{ij}$ is linear in $n$, note that $P_{ij}$ can be constructed 
as the convex hull of the union of four polytopes $P_{ij}^{ab}$, $a,b\in\{0,1\}$ defined as:
$$P_{ij}^{ab}=\left\{\xx\in\{0,1\}^{n+1}~|~\xx_i=a,\xx_j=b,\xx_{n+1}=a\xor b\right\}.$$
Each $P_{ij}^{ab}$ is isomorphic to a face of the $n$-cube and thus has extension complexity $\Oh(n)$.

Now let $\LL_2$ be a language such that a string $\xx\in\LL_2$ if and only if it
satisfies the following properties.
\begin{itemize}
	\item $|\xx|=(n+1)\binom{n}{2}$ for some natural number $n$, and
	\item $\xx_{i_1j_1k}=\xx_{i_2j_2k}$ for all $k\in[n], i< j\in[n].$
\end{itemize}

Each polytope $P\left(\LL_2\left(\left(n+1\right)\cdot{\binom{n}{2}}
\right)\right)$ is just an embedding of $\hypercube_{n+\binom{n}{2}}$ in
$\RR^{(n+1)\binom{n}{2}}$ where $\hypercube_k$ is the $k$-dimensional hypercube. 
Therefore, $\xc(P(\LL_2(\Theta(n^3))))=\Oh(n^2)$ and so $\xc(\LL_2)=\Oh(n^{2/3}).$

Finally, observe that for $m=(n+1)\binom{n}{2}$ the polytope
$P((\LL_1\cap\LL_2)(m))$ when projected to the coordinates labeled
$\xx_{ij(n+1)}$ is just the polytope $\CutP_n$ (cf. Lemma
\ref{lem:xc_sat}). Therefore, $\xc((\LL_1\cap\LL_2)(\Theta(n^3)))=2^{\Omega(n)}$ and even
though $\LL_1, \LL_2\in\cf$, the intersection $\LL_1\cap\LL_2\notin\cf.$ 
\end{proof}

\begin{theorem}\label{thm:closure_setminus}
$\cf$ is not closed under set difference.
\end{theorem}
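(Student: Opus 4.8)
The plan is to obtain non-closure under set difference as an immediate consequence of the non-closure under complement proved in Theorem~\ref{thm:closure_complement}, using the trivial identity $\overline{\LL}=\{0,1\}^{*}\setminus\LL$. The only additional ingredient is the (obvious) fact that the full language is compact: for every $n$ the polytope $P(\{0,1\}^{*}(n))=\conv\{0,1\}^{n}$ is exactly the hypercube $\hypercube_{n}$, which is described by $2n$ inequalities, so $\xc(\{0,1\}^{*})=\Oh(n)$ and hence $\{0,1\}^{*}\in\cf$.

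Next I would simply reuse the two languages built in the proof of Theorem~\ref{thm:closure_complement}: with $\bm{\Phi}$ the family of $3\cnf$ formulae equal to the $\cutsat$ formula $\bm{\varphi}_{m}$ when $m=n^{2}$ and equal to a fixed tautology on $m$ variables otherwise, let $\LL_{\mathrm{unsat}}$ and $\LL_{\mathrm{sat}}$ be the associated languages of non-satisfying and satisfying assignments. There it is already observed that $\LL_{\mathrm{sat}}=\overline{\LL}_{\mathrm{unsat}}$, that $\LL_{\mathrm{unsat}}\in\cf$ (for lengths $m=n^{2}$ by Lemma~\ref{lem:xc_unsat}, which gives $\xc(\UNSAT(\bm{\varphi}_{m}))=\Oh(m^{2})$, and for the other lengths $\UNSAT$ of a tautology is empty), and that $\LL_{\mathrm{sat}}\notin\cf$ because $P(\LL_{\mathrm{sat}}(n^{2}))=\SAT(\bm{\varphi}_{n^{2}})$ has extension complexity $2^{\Omega(n)}$ by Lemma~\ref{lem:xc_sat}.

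Finally I would set $\LL_{1}:=\{0,1\}^{*}$ and $\LL_{2}:=\LL_{\mathrm{unsat}}$. Both lie in $\cf$ by the two observations above, yet $\LL_{1}\setminus\LL_{2}=\{0,1\}^{*}\setminus\LL_{\mathrm{unsat}}=\overline{\LL}_{\mathrm{unsat}}=\LL_{\mathrm{sat}}\notin\cf$, which proves that $\cf$ is not closed under set difference. There is no real obstacle here, since complement is the special case $\{0,1\}^{*}\setminus\LL$ of set difference and $\{0,1\}^{*}$ is compact. (One could alternatively route through Theorem~\ref{thm:closure_intersection} via $\LL_{1}\cap\LL_{2}=\LL_{1}\setminus\overline{\LL_{2}}$, but that would require checking that the complement of the hypercube-embedding language used there is itself compact, which is less immediate than the argument above, so I would prefer the complement route.)
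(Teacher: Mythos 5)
Your proposal is correct and follows essentially the same route as the paper: observe that $\{0,1\}^{*}\in\cf$, write $\overline{\LL}=\{0,1\}^{*}\setminus\LL$, and reduce to the non-closure under complement from Theorem~\ref{thm:closure_complement}. The only difference is that you re-instantiate the explicit witness languages $\LL_{\mathrm{sat}}$ and $\LL_{\mathrm{unsat}}$, which the paper simply cites.
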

\begin{proof}
The complete language $\{0,1\}^*$ clearly belongs to $\cf$. For any language
$\LL$ we have $\overline{\LL}=\{0,1\}^*\setminus \LL$. If $\cf$ were closed
under taking set-difference, it would also be closed under taking complements.
But as pointed out in Theorem \ref{thm:closure_complement}, it is not.
\end{proof}

\begin{theorem}\label{thm:closure_concat}
$\cf$ is closed under concatenation.
\end{theorem}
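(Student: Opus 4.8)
The plan is to cut the length-$n$ slice of $\LL_1\LL_2$ at the boundary between the $\LL_1$-part and the $\LL_2$-part. A string $\zz\in\{0,1\}^n$ lies in $\LL_1\LL_2$ iff $\zz=\xx\yy$ with $\xx\in\LL_1(k)$ and $\yy\in\LL_2(n-k)$ for some $0\leqslant k\leqslant n$, so, viewing each $\LL_i(\cdot)$ as a set of $0/1$ points and noting that concatenating strings amounts to juxtaposing disjoint coordinate blocks,
$$(\LL_1\LL_2)(n)=\bigcup_{k=0}^{n}\bigl(\LL_1(k)\times\LL_2(n-k)\bigr)\subseteq\{0,1\}^n.$$
Since the convex hull of a Cartesian product of point sets equals the product of the convex hulls, this yields
$$P\bigl((\LL_1\LL_2)(n)\bigr)=\conv\Bigl(\bigcup_{k=0}^{n}\bigl(P(\LL_1(k))\times P(\LL_2(n-k))\bigr)\Bigr),$$
where an empty $\LL_1(k)$ or $\LL_2(n-k)$ just makes the corresponding term vanish, and the degenerate split points $k\in\{0,n\}$ (involving the empty string) are absorbed harmlessly.

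Next I would bound each term and then the union. A Cartesian product is the glued product over $k=0$ coordinates, so Lemma \ref{lem:glued_product_xc} (whose hypothesis is vacuous when nothing is glued) gives $\xc\bigl(P(\LL_1(k))\times P(\LL_2(n-k))\bigr)\leqslant\xc(P(\LL_1(k)))+\xc(P(\LL_2(n-k)))$. Because $\LL_1,\LL_2\in\cf$, there are constants $c$ and $C$ with $\xc(P(\LL_i(m)))\leqslant m^{c}+C$ for all $m$: the polynomial bound in the definition of $\cf$ applies for $m$ beyond some threshold $n_0$, and for the finitely many smaller $m$ one uses the trivial $\xc\leqslant|\LL_i(m)|\leqslant 2^{n_0}$. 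Hence every term has extension complexity at most $k^{c}+(n-k)^{c}+2C\leqslant 2n^{c}+2C$. Finally, applying Balas' inequality (Theorem \ref{thm:xc_union}) iteratively to the at most $n+1$ polytopes in the union costs an extra additive $2$ at each of the at most $n$ merges, so
$$\xc\bigl(P((\LL_1\LL_2)(n))\bigr)\leqslant\sum_{k=0}^{n}\bigl(2n^{c}+2C\bigr)+2n=\Oh\bigl(n^{c+1}\bigr),$$
which is polynomial in $n$; therefore $\LL_1\LL_2\in\cf$.

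The argument is essentially bookkeeping, so there is no deep obstacle, only a few points that need care: (i) the union over split points has only $n+1$ terms, so the per-term sum together with the $+2$ per iterated use of Balas still leaves a polynomial bound --- this is the one place the proof could go wrong, since an exponential number of pieces would ruin it; (ii) one must use that $\conv$ distributes over Cartesian products, which is exactly what turns each piece into an honest product polytope to which Lemma \ref{lem:glued_product_xc} (with empty glueing) applies; and (iii) the definition of $\cf$ is asymptotic and the split can be trivial at the ends, which is why I replace the clean polynomial bound by the uniform $m^{c}+C$. None of these requires new ideas.
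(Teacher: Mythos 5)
Your proof is correct and follows essentially the same route as the paper's: decompose $P((\LL_1\LL_2)(n))$ as the convex hull of the union over the $n{+}1$ split points of the product polytopes $P(\LL_1(k))\times P(\LL_2(n-k))$, bound each product by the sum of extension complexities, and apply Balas' theorem to the union. The paper states this in two sentences; you merely supply the bookkeeping (iterated Balas, the uniform polynomial bound for small lengths) that the paper leaves implicit.
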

\begin{proof}
$P(\LL_1\LL_2(n))$ is the union of the polytopes $P(\LL_1(i))\times
P(\LL_2(n-i))$ for $i\in[n]$. Therefore, we have that $\xc(\LL_1\LL_2)\leqslant
\Oh(n(\xc(\LL_1)+\xc(\LL_2))).$
\end{proof}

\begin{theorem}\label{thm:closure_kleene_star}
$\cf$ is closed under Kleene star.
\end{theorem}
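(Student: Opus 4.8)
The plan is to exhibit $P(\LL^*(n))$ as a linear projection of the ``path polytope'' of a directed acyclic graph whose arcs are labelled by the polytopes $P(\LL(\ell))$, and to bound the extension complexity of that polytope by a sum \emph{over arcs} rather than over the (exponentially many) paths. Note that iterating concatenation (Theorem~\ref{thm:closure_concat}) costs a factor $\Oh(n)$ each time and, applied $\Oh(\log n)$ times so as to capture decompositions into up to $n$ pieces, yields only a quasi-polynomial bound; the arc-based construction avoids this.

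For $n\geqslant 1$ one has $\LL^*(n)=\LL^+(n)$ where $\LL^+:=\LL\LL^*$, since a string of length $n\geqslant 1$ lying in $\LL^*$ decomposes as $\xx=\xx_1\cdots\xx_m$ with every $\xx_p\in\LL$ of length $\geqslant 1$ (empty pieces of a decomposition occur only when $n=0$ and may be deleted); the case $n=0$ is trivial. Build the DAG $D_n$ on node set $\{0,1,\ldots,n\}$ with an arc $(i,j)$ for each $0\leqslant i<j\leqslant n$ with $\LL(j-i)\neq\emptyset$, and think of arc $(i,j)$ as carrying the polytope $P(\LL(j-i))$ placed on the coordinate block $\{i{+}1,\ldots,j\}$ of $\RR^n$. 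A directed path $0=i_0<i_1<\cdots<i_m=n$ together with a choice of a vertex of $P(\LL(i_p-i_{p-1}))$ on each of its arcs spells out exactly a string of $\LL^+(n)$, and every string of $\LL^+(n)$ arises this way. Hence $P(\LL^*(n))$ is the projection onto the $\xx$-coordinates of the polytope $\mathcal{R}_n$ assembled as follows, in the spirit of the homogenisation behind Theorem~\ref{thm:xc_union}: introduce a flow variable $f_e\geqslant 0$ on each arc $e$, impose flow conservation at internal nodes and unit flow from $0$ to $n$ (a totally unimodular system with $\Oh(n^2)$ inequalities, whose vertices are exactly the path indicators), and for each arc $e=(i,j)$ impose, in slack form, that the block of $\xx$ indexed by $\{i{+}1,\ldots,j\}$ together with fresh slacks lies in $f_e\cdot P(\LL(j-i))$, the homogenisation of a smallest extended formulation of $P(\LL(j-i))$.

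A direct check shows $\mathcal{R}_n$ is bounded: the flow across any position of $\{1,\ldots,n\}$ equals $1$, so every $\xx_j$ and every auxiliary coordinate is forced into $[0,1]$. And its vertices are exactly the points where $f$ is a path indicator, arcs off the path carry the zero block (the homogenising cones are pointed, so $f_e=0$ forces that block to vanish), and arcs on the path carry a scaled vertex of the corresponding $P(\LL(\cdot))$; therefore $\mathcal{R}_n$ projects onto $P(\LL^+(n))=P(\LL^*(n))$. Counting inequalities, the flow part uses $\Oh(n^2)$ and arc $(i,j)$ uses $\Oh(\xc(P(\LL(j-i))))$; as $D_n$ has at most $n$ arcs of each length $\ell\in\{1,\ldots,n\}$,
$$\xc(P(\LL^*(n)))\ \leqslant\ \xc(\mathcal{R}_n)\ \leqslant\ \Oh(n^2)+n\sum_{\ell=1}^{n}\xc(P(\LL(\ell))).$$
If $\LL\in\cf$, so that $\xc(P(\LL(\ell)))\leqslant \ell^{c}$ for all large $\ell$ and is $\Oh(1)$ for the finitely many remaining ones, the right-hand side is $\Oh(n^{c+2})$; hence $\LL^*\in\cf$.

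The step deserving care — where essentially all the content lies — is the claim that $\mathcal{R}_n$ has no spurious vertices. This is the acyclic-digraph analogue of Balas's union construction (Theorem~\ref{thm:xc_union}) composed with the classical fact that the $s$--$t$ path polytope of a DAG coincides with a flow polytope having $\Oh(|E|+|V|)$ inequalities; the points to watch are pointedness of the homogenising cones and the boundedness argument above. Alternatively one can assemble $\mathcal{R}_n$ block by block along a path by repeated glued products over single-arc interfaces (Lemma~\ref{lem:glued_product_xc}), at the price of more delicate indexing. Everything else — the identity $\LL^*(n)=\LL^+(n)$ and the final arithmetic — is routine.
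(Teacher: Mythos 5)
Your overall strategy — realize $P(\LL^*(n))$ as the $s$--$t$ path polytope of the ``position DAG'' on nodes $\{0,\dots,n\}$ with each arc $(i,j)$ carrying a homogenized copy of an extended formulation of $P(\LL(j-i))$ — is sound and gives the right order of magnitude, but there is one step that fails as literally written. You impose, for \emph{every} arc $e=(i,j)$, that the block of the \emph{shared} variable $\xx$ indexed by $\{i{+}1,\dots,j\}$ lies in $f_e\cdot P(\LL(j-i))$, and you then argue that $f_e=0$ forces that block to vanish. But the blocks of different arcs overlap: a position $k$ is covered by every arc $(i,j)$ with $i<k\leqslant j$, and on any vertex most of these arcs are off the chosen path. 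So your own observation forces $\xx_k=0$ whenever \emph{some} off-path arc covers position $k$, which for generic $\LL$ and $n\geqslant 2$ kills every nonzero string. The construction must be disaggregated: give each arc its own copy $\yy^e\in\RR^{j-i}$ with $(\yy^e,\text{slacks})\in f_e\cdot Q_e$, and recover $\xx_k:=\sum_{e\text{ covering }k}\yy^e_{k-i(e)}$. With that change the argument goes through (a path decomposition $f=\sum_P\lambda_P\chi^P$ exhibits any feasible $\xx$ as $\sum_P\lambda_P$ times a concatenation along $P$, and your boundedness and counting arguments survive verbatim), and the bound $\Oh(n^2)+n\sum_{\ell}\xc(P(\LL(\ell)))=\Oh(n^{c+2})$ stands.

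Once repaired, your route is genuinely different in mechanism from the paper's, though the two encode the same combinatorial object. The paper builds, for each piece-length $k$ and offset $i$, a polytope whose vertices carry the string flanked by two unit vectors in $\RR^{n+1}$ marking its start and end positions; it takes the union over $k$ via Theorem~\ref{thm:xc_union}, chains $n{+}1$ copies by glued products (Lemma~\ref{lem:glued_product_xc}) over these one-hot position markers, and finally takes a face fixing the first marker to position $0$ and the last to position $n$. The position markers are exactly your DAG nodes in unit-vector encoding, and the glued product plays the role of your flow-conservation constraints. Your version replaces the iterated gluing by a single explicit unit-flow system plus per-arc homogenization, which is arguably more transparent and avoids the factor-of-$(n{+}1)$ chaining (your bound $\Oh(n^{c+2})$ matches the paper's $\Oh(n^3\xc(\LL))$ in order); the paper's version avoids having to verify pointedness and boundedness of homogenization cones by leaning entirely on the two black-box lemmas it has already stated.
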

\begin{proof}
Let $\ee^k_l$ denote the $l$-th unit vector of length $k$, that is, a vector of length $k$
all whose entries except the $l$-th one is zero and the $l$-th entry is one. Also,
let $\vzero^k$ denote a vector of all zeroes of length $k$. Now, let $\LL\in\cf$. For $0\leqslant k \leqslant n$, consider the polytope $P_k$ 
defined as
$$P_k:=\conv\left(\left\{\left.\begin{pmatrix}\ee^{n+1}_{i+1}\\\vzero^{i}
\\\xx\\\vzero^{n{-}i{-}k}\\\ee^{n+1}_{i+k+1}
\end{pmatrix}\in\left\{0,1\right\}^{3n+2}\right| \begin{array}{rl}& \xx\in \LL
\\\land&|\xx|=
k\\\land& 0\leqslant i\leqslant n-k \end{array} \right\}\right).$$ 
In particular $P_0$ consists of the vectors $\ee^{n+1}_{i+1},\vzero^n,\ee^{n+1}_{i+1}$ 
for $0\leqslant i\leqslant n$. Define
$P:=\conv(\bigcup_{j=0}^n
P_j).$ Then, $\displaystyle\xc(P) \leqslant \sum_{k=0}^n\xc(P_k) \leqslant
\sum_{k=0}^n
(n\xc(P(\LL(k)))) \leqslant \Oh(n^2\xc(\LL)).$

The vertices of the polytope $P$ are $0/1$ vectors of length $3n+2$ with the following
structure. The middle $n$ positions (starting at position $n+2$) contain some $\xx\in\{\varepsilon\}\cup\LL$
padded with some zeroes on the left and the right. We will refer to these as the padded part.
The first $n+1$ coordinates contain 
a one at the $(i+1)$-th position exactly if $i$ zeroes are padded to the left of $\xx$ 
and the last $n+1$ coordinates contain a one at the $(j+1)$-th position exactly if the number
of padded zeroes on the left of $\xx$ together with the length of $\xx$ equals $j$. This structure allows
us to simulate the concatenation of strings in $\LL$ by taking the glued product of $P$ with itself. Take, 
for example, the glued product of $P$ with itself over the last $n{+1}$ coordinates of $P$ and the 
first $n{+}1$ coordinates of $P$. A vertex of this polytope is of the form 
$\ee^{n+1}_{i+1},\vzero^{i},\xx,\vzero^{n{-}i{-}|\xx|},\ee^{n+1}_{i{+}|\xx|{+}1},\vzero^{i{+}|\xx|},\yy,\vzero^{n{-}i{-}|\xx|{-}|\yy|},\ee^{n+1}_{i+|\xx|+|\yy|+1}$ where $\xx,\yy\in\{\varepsilon\}\cup\LL$.
If we add the $n$ coordinates containing the string $\xx$ padded with zeroes to the $n$ coordinates 
containing the string $\yy$ padded with zeroes to make the total length (including the padding) $n$, 
we get the concatenated string $\xx\yy$ padded with 
zeroes. We will use this phenomenon to repeatedly ``concatenate'' strings of $\LL$ by taking
the glued product of $P$ with itself and select the face which corresponds to strings of length $n$ (with no padded zeroes).

Let $S_0$ be the face of $P$ defined by
the first coordinate being $1$ and the following $n$ coordinate being $0$. That is,
if the first $n+1$ coordinates of $P$ are labeled $z_0,\ldots,z_n$ then $S_0$ is 
the face $z_0=1,z_1=0,\ldots,z_n=0$.
This corresponds to selecting strings of $\LL$ with no zeroes padded to the left.
That is, the padded part of vertices of $S_0$ consists of vectors of the form
$\xx,\vzero^{n-|\xx|}$ where $\xx\in\{\varepsilon\}\cup\LL$.
Construct $S_{i{+}1}$ by taking the glued product of $S_i$ with $P$ over the
last
$n{+}1$ coordinates of $S_{i}$ and the first $n{+}1$ coordinates of $P$. If
we only look at the padded parts of the components of $S_k$ we see a 
vector of length $n(k+1)$ consisting of $k+1$ consecutive parts of $n$ consecutive coordinates,
each of which comes from the padded part of one of the component $P$ of $S_k$. 
Any such vector has the form $\xx_0,\vzero^{n-|\xx_0|},\vzero^{|\xx_0|},\xx_1,\vzero^{n-|\xx_0|-|\xx_1|},\ldots,\vzero^{|\xx_0|+\cdots+|\xx_{k-1}|},\xx_k,\vzero^{n-|\xx_0|-\cdots-|\xx_k|}$ where
$\xx_0,\ldots,\xx_k\in\{\varepsilon\}\cup\LL$. Take each of these $k+1$ parts of $n$ coordinates and add them together. We get a vector of the form $\xx_0,\xx_1,\ldots,\xx_k,\vzero^{n-|\xx_0|-\cdots-|\xx_k|}$. 
This corresponds to the concatenated string $\xx_0,\ldots,\xx_k$ padded with zeroes to get a total length of $n$. In particular, if we 
consider the polytope $S_n$ then among its vertices we have all those that correspond to strings of length $n$ in $\LL^*$. These are exactly the vertices we wish to isolate.

Take the face $R$ of $S_{n}$ defined by the penultimate $n$ coordinates being $0$ and
the last coordinate being $1$. That is, if the last $n+1$ coordinates of $S_n$ are labeled
$z_0,\ldots,z_n$ then $R$ is the face $z_0=0,\ldots,z_{n-1}=0,z_n=1$. 
If we only look at the padded parts of vertices in this face, it contains vertices of the form
$\xx_0,\vzero^{n-|\xx_0|},\vzero^{|\xx_0|},\xx_1,\vzero^{n-|\xx_0|-|\xx_1|},\ldots,\vzero^{|\xx_0|+\cdots+|\xx_{n-1}|},\xx_n$ where 
each $\xx_i\in\{\varepsilon\}\cup\LL$. Also, we have that 
$|\xx_0|+\cdots+|\xx_n|=n$. If we add together each consecutive $n$ coordinates containing the padded $\xx_i$ then we 
get the vector $\xx_0,\ldots,\xx_n$ which belongs to $\LL^*$ and has length $n$. Moreover any string in $\LL^*$
of length $n$ can be constructed by concatenating $n+1$ strings $\xx_0,\ldots,\xx_n$ where each $\xx_i\in\{\varepsilon\}\cup\LL$. 
This concatenation would also be simulated in the polytope $S_n$
and would appear in the face $R$. Therefore, $R$ is an EF for
$P(\LL^*(n)).$ 
Moreover, $\xc(R)\leq\xc(S_{n})\leqslant
(n+1)\xc(P) \leqslant  \Oh(n^3\xc(\LL)).$ 

Therefore, $\xc(\LL^*)=\Oh(n^3\xc(\LL))$ and $\LL^*\in\cf.$
\end{proof}

\section{Computational power of compact languages}
We would like to start the discussion in this section by pointing out that the class of compact languages is in some sense too powerful. This power comes just from non-uniformity in the definition.

\begin{proposition}
$\cf$ contains undecidable languages.
\end{proposition}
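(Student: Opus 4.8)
The plan is to exploit the non-uniformity inherent in the definition of $\cf$: the extension complexity of a language $\LL$ is measured only through the family of polytopes $\{P(\LL(n))\}_{n\in\mathbb{N}}$, so two languages that differ only in how their length-$n$ slices are \emph{structured} can have wildly different complexities even if they encode the same information. The key observation is that a slice $\LL(n)\subseteq\{0,1\}^n$ carries at most $n$ bits of ``combinatorial'' freedom in a weak sense — in particular, if every slice $\LL(n)$ is either empty or all of $\{0,1\}^n$, then every $P(\LL(n))$ is either empty or the full hypercube $\hypercube_n$, hence $\xc(\LL)\leqslant 2n = \Oh(n)$, so $\LL\in\cf$ regardless of whether the map $n\mapsto[\LL(n)=\{0,1\}^n]$ is computable.

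Concretely, I would first fix any undecidable set $S\subseteq\mathbb{N}$ (for instance, the set of $n$ such that the $n$-th Turing machine halts on empty input, suitably coded). Then define
\[
\LL \;:=\; \bigl\{\xx\in\{0,1\}^*\;\bigm|\;|\xx|\in S\bigr\},
\]
i.e. a string is in $\LL$ if and only if its length belongs to $S$. The second step is to verify that $\LL$ is undecidable: a decision procedure for $\LL$ would, on input $1^n$, decide membership of $n$ in $S$, contradicting undecidability of $S$. The third step is to bound $\xc(\LL)$: for each $n$, either $n\in S$, in which case $\LL(n)=\{0,1\}^n$ and $P(\LL(n))=\hypercube_n$ with $\xc(\hypercube_n)=2n$; or $n\notin S$, in which case $\LL(n)=\emptyset$ and $P(\LL(n))$ is the empty polytope, which has extension complexity $0$ (or $1$, under whatever convention one adopts — in any case $\Oh(1)$). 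Either way $\xc(P(\LL(n)))\leqslant 2n$, so taking $c=2$ (or any $c\geqslant 1$ with a suitable additive slack) gives $\xc(\LL)\leqslant n^c$ for all large $n$, whence $\LL\in\cf$.

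There is essentially no obstacle here; the only point requiring a sentence of care is the treatment of the empty polytope — one should state the convention that $\xc(\emptyset)$ is finite (indeed it needs no inequalities beyond an infeasible one, so at most $1$), so that the asymptotic bound is uniform across all $n$ and not derailed by the infinitely many ``empty'' slices. With that convention in place, the argument is complete: $\LL$ is an undecidable language lying in $\cf$, so $\cf$ contains undecidable languages. One could equally well phrase the construction so that $\LL$ agrees with $\{0,1\}^*$ on lengths in $S$ and with some fixed compact language (even $\emptyset$) off $S$; the flexibility underscores that the phenomenon is driven purely by the per-length, non-uniform nature of the complexity measure.
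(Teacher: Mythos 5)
Your proposal is correct and is essentially identical to the paper's own argument: the paper also takes an uncomputable function $\ff:\mathbb{N}\to\{0,1\}$ and lets $\LL$ contain all strings of length $n$ exactly when $\ff(n)=1$, observing that each slice is either the full hypercube or empty, giving $\xc(\LL)=\Theta(2n)$. Your extra care about the convention for the empty polytope is a reasonable touch the paper glosses over, but it does not change the substance.
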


It is easy to construct undecidable languages that are in $\cf$. Take any uncomputable function $\ff:\mathbb{N}\to\{0,1\}$ and define the language $\LL$ containing all strings of length $n$ if $\ff(n)=1$ and no strings of length $n$ if $\ff(n)=0$. That is, $P(\LL(n))$ is the empty polytope when $\ff(n)=0$ and the $n$-cube when $\ff(n)=1$. So the extension complexity of $\LL$ is $\Theta(n).$

At the moment we do not want to start a discussion about controlling the beast that non-uniformity unleashes. Rather we will focus on something more positive. We will show that if a language is accepted by a non-deterministic $\logspace$ online Turing machine, then its extension complexity is polynomial. This brings into fold many non-regular languages already. And as we will see, this characterization allows us to give simple proofs for polynomial extension complexity for some polytopes.

Before we proceed, we would also like to point out that, in the following discussion, the assumption on the input tape being accessed in a one-way fashion is not something one can remove easily. There are languages in $\logspace$ and $\mathrm{AC}^0$ that have exponential extension complexity: for example, the string of all perfect matchings of $K_n$. As remarked earlier, this language has exponential extension complexity \cite{stoc/Rothvoss14}. A string $\xx$ of this language of length $\binom{n}{2}$ encodes a perfect matching by setting $\xx_{ij}$ to one if the edge $e_{ij}$ is selected. We also have that $\xx_{ij}=\xx_{ji}$ so that the graph is undirected. To see that this language is in $\logspace$ note that a very simple algorithm recognizes whether a given string encodes a perfect matching of $K_n$. First we can loop through all values of $i,j$ to verify that $\xx_{ij}=\xx_{ji}$. Then, for every vertex $i$ and for every edge $e_{ij}$ incident to this vertex we count how many values of $x_{ij}$ are set 
to one and verify that only one edge is selected. All of this can be done with a constant number of pointers. To see that this language is in $\mathrm{AC}^0$ note that we can write the perfect matching predicate as a CNF formula. We only need to encode $\xx_{ij}=\xx_{ji}$ for all $i,j$ and that for a fixed $i$ if $\xx_{ij}=1$ then $\xx_{ik}=0$ for all $k\neq j$. Now for any CNF formula there is an obvious $\mathrm{AC}^0$ circuit of depth two by first turning all disjunctions into OR gates and then the conjunctions as one AND gate.

\subsection{Polytopes of walks in graphs}
 \begin{definition}
 Let $D=(V,A)$ be a directed graph with every edge labeled either zero or one. 
 Consider two nodes $u,v\in V$ and a walk $\walk$ of length $n$ from $u$ to $v$.
 The \DEF{signature} of $\walk$ -- denoted by $\signature_\walk$ -- is the sequence 
 of edge labels along the walk $\walk$. The node $u$ is called the \DEF{source} of the walk 
and the node $v$ the \DEF{destination}.
 \end{definition}
 

\begin{definition}
 Consider the convex hull of all zero-one vectors of the form $(u,\sigma,v)$ where $u$ and $v$ are indices of two nodes in $D$ and $\sigma$ is the signature of some 
 walk of length $n$ from $u$ to $v$. This polytope -- denoted by $\pwalk{n}{D}$ -- is called 
 the \DEF{Markovian polytope of $D$}.
\end{definition}

\begin{lemma}\label{lem:xc_walksignatures}
Let $D=(V,A)$ be a directed graph (possibly with self-loops and multiple edges) 
with every edge labeled either zero or one. Then, $\pwalk{n}{D}$ has extension 
complexity at most $2|V|+|A|\cdot n.$
\end{lemma}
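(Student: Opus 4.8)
The idea is to build an extended formulation for $\pwalk{n}{D}$ directly from a ``layered'' or ``time-expanded'' version of $D$, exploiting the fact that a walk of length $n$ is just a sequence of $n$ edge choices, and that tracking a walk only requires remembering the current vertex. First I would introduce variables that describe, for each time step $t\in\{0,1,\dots,n\}$, which vertex the walk occupies, and for each time step $t\in\{1,\dots,n\}$, which arc is traversed. Concretely, let $y^t_v$ for $v\in V$, $t\in\{0,\dots,n\}$ be a ``vertex occupancy'' variable and let $w^t_a$ for $a\in A$, $t\in\{1,\dots,n\}$ be an ``arc usage'' variable. The natural constraints are: $\sum_v y^t_v = 1$ for each $t$ (exactly one vertex per layer), $\sum_a w^t_a = 1$ for each $t$ (exactly one arc per step), flow-consistency $y^{t-1}_u = \sum_{a=(u,\cdot)} w^t_a$ and $y^t_v = \sum_{a=(\cdot,v)} w^t_a$ linking arcs to the vertices they enter and leave, and all variables nonnegative. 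This polytope $Q$ lives in dimension $|V|\cdot(n+1) + |A|\cdot n$, and has $O(|V| + |A|\cdot n)$ facet-type inequalities after the equalities are accounted for — well within the claimed bound $2|V| + |A|\cdot n$ once one is careful with the bookkeeping (the $2|V|$ term covering the $y^0$ and $y^n$ coordinates, the $|A|n$ term covering the arc variables across all $n$ steps).

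Next I would define the projection. The target space of $\pwalk{n}{D}$ records $(u,\sigma,v)$, i.e.\ the source vertex (an indicator over $V$), the destination vertex (an indicator over $V$), and the signature $\sigma\in\{0,1\}^n$. I would set the source coordinates equal to $y^0$, the destination coordinates equal to $y^n$, and the $t$-th signature coordinate $\sigma_t := \sum_{a : \mathrm{label}(a)=1} w^t_a$, i.e.\ the sum of the arc-usage variables over arcs labeled $1$ at step $t$. Call this linear map $\pi$. I then need to show $\pi(Q) = \pwalk{n}{D}$.

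The verification splits into the two usual inclusions. For $\pi(Q)\supseteq\pwalk{n}{D}$: every walk $\walk = (u = v_0, a_1, v_1, \dots, a_n, v_n = v)$ yields an integral point of $Q$ by setting $y^t_{v_t}=1$, $w^t_{a_t}=1$, and all other coordinates zero; one checks it satisfies every constraint, and that $\pi$ sends it to $(u,\sigma_\walk,v)$. Since $Q$ is convex and contains all such points, it contains their convex hull, and $\pi$ of that hull is $\pwalk{n}{D}$. For $\pi(Q)\subseteq\pwalk{n}{D}$: it suffices to show every \emph{vertex} of $Q$ is integral (a $0/1$ point), because then each vertex corresponds to exactly one walk — the occupancy and arc-usage variables, being $0/1$ and satisfying the ``exactly one'' and flow constraints, force a unique vertex $v_t$ per layer and a unique arc $a_t$ per step with $a_t$ going from $v_{t-1}$ to $v_t$ — so $\pi$ maps it into $\pwalk{n}{D}$, and then $\pi(Q) = \pi(\conv(\text{vertices})) = \conv(\pi(\text{vertices})) \subseteq \pwalk{n}{D}$. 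Integrality of the vertices is the one genuine technical point: I would argue it via total unimodularity of the constraint matrix. The constraint system is exactly an (uncapacitated) flow / assignment structure on a layered digraph — the incidence matrix of a bipartite-like layered structure — which is totally unimodular, and since the right-hand sides are integral, all vertices of $Q$ are integral. Alternatively, one can observe directly that $Q$ is a face-lattice-compatible ``path polytope'' of the time-expanded graph, a classical integral polytope.

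\textbf{Main obstacle.} The only real subtlety is pinning down the constant in the size bound: the raw formulation has roughly $(n+1)|V| + n|A|$ nonnegativity constraints, which is larger than $2|V| + n|A|$. To get the stated bound one must exploit that most of the $y^t_v \ge 0$ constraints are implied — e.g.\ $y^t_v = \sum_{a=(\cdot,v)} w^t_a$ is a sum of already-nonnegative arc variables, so only the $w^t_a\ge 0$ inequalities ($n|A|$ of them) plus the genuinely-needed bounds on the boundary layers $y^0, y^n$ (contributing the $2|V|$) survive as facets, with all ``exactly one'' constraints being equalities that do not count toward the size. I would carry out this implication-chasing carefully; everything else (the two projection inclusions and the total unimodularity argument) is routine. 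I expect this to be the step where the write-up needs the most care, and it is plausible the author instead phrases the bound slightly more loosely or absorbs the layered-graph construction into Lemma~\ref{lem:glued_product_xc} by gluing $n$ copies of a single-step polytope over shared vertex-layer coordinates, which would give the additive bound $2|V| + n|A|$ cleanly since each single-step piece has a vertex with exactly one nonzero glueing coordinate.
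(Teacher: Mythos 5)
Your construction is correct, and it builds essentially the same underlying object as the paper (the time-expanded/layered copy of $D$ with one transition block per step), but you justify it by a genuinely different route. The paper never writes down an explicit inequality system: it defines a single-step transition polytope $P_{\mathrm{trans}}$ whose vertices are the triples $(\ee_u,z,\ee_v)$ for arcs $(u,v)$ with label $z$ (so $\xc(P_{\mathrm{trans}})\leqslant|A|$ trivially, as a polytope with $|A|$ vertices), glues $n$ copies of it over the unit-vector ``current vertex'' coordinates, caps the two ends with source/destination polytopes of extension complexity $|V|$ each, and gets additivity of extension complexities directly from Lemma~\ref{lem:glued_product_xc} (applicable because each vertex has exactly one nonzero among the glueing coordinates --- precisely the observation you make in your last sentence). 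You instead write the layered flow LP explicitly and prove integrality via total unimodularity of the incidence matrix of the time-expanded digraph; this works (after substituting out the $y^t_v$ via the linking equalities the system is a unit-flow system on a layered DAG, which is TU even with parallel arcs and with self-loops of $D$, since those become ordinary arcs between consecutive layers), and your worry about the constant resolves itself in your favour: all $y^t_v\geqslant 0$ are implied because each $y^t_v$ equals a sum of nonnegative $w$-variables, so only the $n|A|$ inequalities $w^t_a\geqslant 0$ are needed, comfortably within $2|V|+|A|\cdot n$ (equalities do not count toward size). The trade-off: the paper's argument is shorter and black-boxes the integrality issue inside the glued-product lemma, while yours is self-contained and makes the projection map (in particular $\sigma_t=\sum_{a:\,\mathrm{label}(a)=1}w^t_a$) completely explicit. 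One cosmetic mismatch: the paper's $P_0$ and $P_f$ carry the source/destination as a scalar index $i$ alongside the unit vector $\ee_i$, whereas you use indicator vectors throughout; given that $\pwalk{n}{D}$ is defined over zero-one vectors, your reading is the cleaner one and nothing hinges on it.
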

\begin{proof}
Let us encode every vertex of $D$ with a zero-one vector of length $|V|$ such that 
the unit vector $\ee_i$ represents vertex $i$.

Define polytope $P_\text{trans}\subset \{0,1\}^{|V|+1+|V|}$ with $(a,z,b) 
\in\{0,1\}^{|V|+1+|V|}$ a vertex of $P_\text{trans}$ if and only if it encodes 
a possible transition in $D$. That is, $a$ and $b$ encode vertices of $V$, and the 
coordinate $z$ represents the label of the edge following which one can move from $a$ to $b$. 
Since $P_\text{trans}$ has at most $|A|$ vertices $\xc(P_ \text{trans}) \leqslant |A|.$ 

Let $P_0$ be the convex hull of $(i,\ee_i)$ for $i\in V$ and $P_f$ be the convex hull
of $(\ee_i,i)$ for $i\in V$. Observe that the two polytopes are the same except
for relabeling of coordinates. Also, $\xc(P_0)=\xc(P_f)\leqslant |V|$.

Let $P_1=P_\text{trans}$. For $2\leqslant i\leqslant n$, 
construct the polytope $P_{i}$ by gluing the last $|V|$ coordinates of $P_{i-1}$ 
with the first $|V|$ coordinates of $P_\text{trans}$. By Lemma \ref{lem:glued_product_xc} we have 
that $\xc(P_{n})\leqslant |E|\cdot n.$ Also, notice that the vertices of $P_n$ exactly encode walks of length $n$ between two vertices of $D$. Every vertex of $P_n$ is of the form $\ee_{i_0},z_1,\ee_{i_1},\ldots,z_n,\ee_{i_n}$ where $i_0,i_1,\ldots,i_n$ is a walk of length $n$ between $i_0$ and $i_n$. The signature of this walk is exactly $z_1,\ldots,z_n$. Furthermore, for any walk $i_0,\ldots,i_n$ of length $n$ between $i_0$ and $i_n$ and with the signature $z_1,\ldots,z_n$, the vector $\ee_{i_0},z_1,\ee_{i_1},\ldots,z_n,\ee_{i_n}$ is a vertex of $P_n$.

Finally, let $P$ be the polytope obtained by gluing last $|V|$ coordinates of $P_0$
with the first $|V|$ coordinates of $P_n$, and then gluing the last $|V|$ vertices of 
the result with the first $|V|$ coordinates of $P_f$. Note that $\xc(P)\leqslant 2|V|+|A|\cdot n.$

To complete the proof, notice that $P$ is an extended formulation for $\pwalk{n}{D}.$ 
In particular, each vertex of $P$ is of the form $i_0,\ee_{i_0},z_1,\ee_{i_1},\ldots,z_n,\ee_{i_n},i_n$, where
$i_0,i_1,\ldots,i_n$ is a walk of length $n$ from $i_0$ to $i_n$. This walk has the signature $z_1,\ldots,z_n$.
Also, for every walk $i_0,\ldots,i_n$ of length $n$ between $i_0$ and $i_n$ with signature $z_1,\ldots,z_n$ we have
$i_0,\ee_{i_0},z_1,\ee_{i_1},\ldots,z_n,\ee_{i_n},i_n$ as a vertex of $P$. Therefore projecting out the coordinates containing values $\ee_{i_j}$ produces 
exactly the vertices of $\pwalk{n}{D}.$ 
\end{proof}

\subsection{Polytopes for Online Turing Machines}
\begin{lemma}\label{lem:kpasstoonepass}
Let $\LL\in\nspace{k}{s(n)}$. Then, $\LL\in\nspace{1}{\Oh(ks(n))}$.
\end{lemma}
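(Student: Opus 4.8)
The plan is to simulate a $k$-pass machine $M$ by a $1$-pass machine $M'$ that carries out all $k$ passes ``in parallel'' while reading the input once. During the single left-to-right scan of the input, $M'$ maintains $k$ separate copies of the work tape of $M$, one for each pass, together with $k$ copies of the finite control (current state, head position on the work tape). When $M'$ reads the input symbol at position $p$, it feeds that symbol to each of the $k$ simulated passes simultaneously: for each $j \in \{1,\dots,k\}$, it updates the $j$-th copy of the work tape and control according to one step (or, more precisely, according to all the moves $M$ makes on its work tape while the input head sits on cell $p$ during pass $j$). Since the input head in each pass of $M$ moves only left to right, each pass consumes the input symbols in the same order $1,2,\dots,n$, so a single synchronized scan suffices to advance all $k$ passes together.

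The key steps, in order, are: (i) formalize a ``configuration bundle'' consisting of the $k$ tuples (state, work-tape contents, work-head position), and observe that this bundle has size $\Oh(k s(n))$ since each of the $k$ work tapes uses $s(n)$ cells and the head positions and state indices each need $\Oh(\log s(n) + 1)$ bits; (ii) describe the transition of $M'$ on reading one input symbol as applying, for each $j$, the (nondeterministically chosen) block of moves that pass $j$ of $M$ performs between advancing its input head from cell $p$ to cell $p{+}1$ — note that between two successive input-head advances a space-$s(n)$ machine either makes at most $2^{\Oh(s(n))}$ work-tape steps or loops forever, so $M'$ can detect and reject looping computations, keeping each such block finite; (iii) handle the hand-off between passes: pass $j{+}1$ of $M$ starts with the input head back at cell $1$ but with the work tape in whatever state pass $j$ left it, so in the bundle the initialization of copy $j{+}1$ must wait until copy $j$ has finished its entire pass — this is automatic because at the end of the scan (input head past cell $n$) copy $j$'s final work-tape contents become copy $j{+}1$'s starting contents; to make everything happen in one scan we instead interleave: copy $j$ processes the symbol currently under the head as ``its'' symbol for its own pass, and the bundle advances copy $j$'s pass index once copy $j$ has seen all $n$ symbols; (iv) $M'$ accepts iff the $k$-th simulated pass reaches an accepting halt. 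Finally, argue correctness of the nondeterministic simulation: an accepting sequence of nondeterministic choices for $M$ across its $k$ passes maps to one for $M'$, and conversely, so $\LL(M') = \LL(M) = \LL$.

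The subtlety I expect to be the main obstacle is step (iii), the bookkeeping that lets all $k$ passes be driven by a \emph{single} left-to-right scan even though pass $j{+}1$ logically cannot begin until pass $j$ has ended. The clean resolution is to have $M'$ store, alongside each of the $k$ work-tape copies, a ``phase'' marker in $\{1,\dots,k\}$ recording which pass that copy is currently emulating, and to note that because all phases move through the input in the same direction at the same rate, at input position $p$ copy number $j$ is emulating step $p$ of \emph{some} pass; since the work-tape contents are the only information that needs to flow from one pass to the next, and copy $j$ is precisely the carrier of that information, the final contents of copy $j$ at $p=n$ are exactly the correct initialization for the next pass, which the bundle simply relabels. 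Once this is set up, the space bound $\Oh(k s(n))$ is immediate from step (i), and one should also remark that the construction is effective (it yields an actual online NTM), and that the same argument with ``nondeterministic'' replaced by ``deterministic'' proves the analogous statement $\dspace{k}{s(n)} \subseteq \dspace{1}{\Oh(k s(n))}$.
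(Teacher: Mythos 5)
Your overall skeleton (run all $k$ passes in parallel on $k$ work tapes during a single left-to-right scan, for total space $\Oh(ks(n))$) matches the paper's, but the mechanism you propose for handing information from pass $j$ to pass $j{+}1$ does not work, and that hand-off is exactly where the content of the lemma lies. Copy $j{+}1$ must start emulating pass $j{+}1$ at input position $1$, and to do so it needs its initial work-tape contents, namely the \emph{final} work-tape contents of pass $j$ --- which are determined only after the entire input has been scanned. Your step (iii) and your ``clean resolution'' paragraph never escape this circularity: you say that copy $j$'s final contents at $p=n$ ``become'' copy $j{+}1$'s starting contents, but by the time $p=n$ the one-way input head is exhausted and copy $j{+}1$ has no input left to read; if instead copy $j{+}1$ runs in lockstep with copy $j$ from $p=1$, it is running from an unknown initial tape. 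The paper closes this gap by using nondeterminism in an essential way: the simulator \emph{guesses}, at the very start, the initial work-tape configuration of each of the $k$ passes, simulates all passes in parallel from these guessed configurations during the single scan, and at the end \emph{verifies} that the actual final configuration produced on tape $i$ equals the guessed initial configuration on tape $i{+}1$, accepting only if all consistency checks pass and the last tape is in an accepting state. This guess-and-verify step is the missing idea in your proposal; without it the parallel simulation cannot be driven by one scan.

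A telling symptom is your closing claim that the same argument yields $\dspace{k}{s(n)}\subseteq\dspace{1}{\Oh(ks(n))}$. That inclusion is false in general: standard pointer-chasing-style arguments give languages decidable deterministically in $k$ passes with $\Oh(\log n)$ space whose one-pass (even nondeterministic or randomized) space complexity is nearly linear, far exceeding $\Oh(k\log n)$. Any argument that proves the deterministic analogue must therefore be broken. The paper's argument correctly does not yield it, because the guessing step is inherently nondeterministic; the paper still obtains consequences for deterministic multi-pass machines simply by regarding them as nondeterministic ones and applying this lemma.
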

\begin{proof}
Let $M_n$ be a $k$-pass nondeterministic Turing machine that accepts strings of length $
n$ and uses space $s(n)$. We will simulate $M_n$ using a multi-tape single pass nondeterministic Turing machine 
called the simulator $S$. $S$ is supplied with $k$ work tapes. $S$ starts 
by guessing the initial work state of $M_n$ at the start of $i$-th pass and 
writing them on the $i$-th work tape. $S$ then simulates (using extra space 
on each work tape) each of the passes independently starting from their 
respective initial configuration. Once the entire input has been scanned, the 
simulator verifies that the work space of $M_n$ on the $i$-th tape at the end of the pass 
matches the guess for the initial content for the $(i+1)$-th tape. $S$ will accept 
only if the last tape is in an accepting state.

To store the content of work tape and the current state, $S$ needs $s(n)+o(s(n))$ space for each pass. Thus $S$ uses a single pass and total space of $ks(n)(1+o(1))$. 
\end{proof}

Thus for our purposes it suffices to restrict our attention to single pass TMs.

\begin{definition}
The \DEF{configuration graph} for an input of length $n$ for a given one-pass Turing machine (deterministic or non-deterministic) is constructed as follows. For each fixed $n$,
consider the directed graph whose nodes are marked with a label consisting of 
$s(n)+\lceil\log{(s(n))}\rceil$ characters. The labels encode the complete configuration 
of the Turing machine: the content of the work tape and head position on the work tape. We make directed edges between two nodes $u$ 
and $v$ if the machine can reach from configuration $u$ to configuration $v$ by a sequence of transitions with exactly one input bit read in between. The directed edge is labeled by the input bit read during this sequence of transition.

Finally, we add two special nodes: a start node with a directed edge to each possible starting configuration of the machine, and a finish node with a directed edge from each possible accepting configuration. Each of these directed edges are labeled by zero.
\end{definition}

\begin{lemma}\label{lem:cg_size}
The configuration graph for an input of length $n$ for a one-pass Turing machine has $\Oh(2^{s(n)}s(n))$ nodes. If the Turing machine is non-deterministic, this graph has $\Oh(4^{s(n)}(s(n))^2)$ edges. If the Turing machine is deterministic then this graph has $O(2^{s(n)}s(n))$ edges. 
\end{lemma}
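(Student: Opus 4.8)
The proof is a direct counting argument, so the plan is simply to bound the number of nodes first and then, separately for the deterministic and non-deterministic cases, bound the out-degree of each node and multiply.

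First I would count the nodes. By construction every ordinary node carries a label of $s(n)+\lceil\log{(s(n))}\rceil$ characters recording the worktape content together with the worktape head position, and to obtain a complete configuration one also appends the current control state, of which there are only $\Oh(1)$ many. Assuming a binary worktape alphabet (a constant-size alphabet only rescales $s(n)$ by a constant), the number of ordinary configurations is therefore at most $\Oh(1)\cdot 2^{s(n)}\cdot 2^{\lceil\log s(n)\rceil}=\Oh(2^{s(n)}s(n))$; adding the two special start and finish nodes does not change this. Write $N=\Oh(2^{s(n)}s(n))$ for this bound; this already proves the first assertion.

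Next, for the deterministic case, I would note that from an ordinary configuration $u$ the machine's behaviour between reading one input bit and reading the next is completely determined once the value $b\in\{0,1\}$ of that bit is fixed: it executes a uniquely determined (possibly empty, possibly never-returning) sequence of worktape steps and is then poised to read the next bit in a unique configuration $v$. Hence $u$ has at most two outgoing edges, one for $b=0$ and one for $b=1$ (fewer if some internal computation loops forever without reading input, in which case no edge is added). The start node contributes $\Oh(1)$ outgoing edges and the finish node none, so the total edge count is at most $2N+\Oh(1)=\Oh(2^{s(n)}s(n))$. For the non-deterministic case the same internal computation is now a branching process, so after fixing $b$ the set of configurations in which the machine next becomes poised to read has size at most $N$; thus every ordinary node has out-degree at most $2N$ and the start node out-degree at most $N$. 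Summing over all nodes gives at most $2N\cdot N+\Oh(N)=\Oh(N^2)=\Oh((2^{s(n)}s(n))^2)=\Oh(4^{s(n)}(s(n))^2)$ edges, as claimed.

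There is no real obstacle here beyond careful bookkeeping: one must remember to include the control state and the head position — not merely the tape content — in what counts as a configuration, since that is exactly what produces the polynomial factors $s(n)$ and $(s(n))^2$, and one must agree to treat internal computations that diverge without reading input as contributing no edge. With those conventions fixed the lemma is just a head count.
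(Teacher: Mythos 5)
Your proof is correct and follows essentially the same direct counting argument as the paper: the node count comes from the $s(n)+\lceil\log s(n)\rceil$-character configuration labels, the deterministic edge bound from the out-degree of at most two per node, and the non-deterministic bound from at most two edges per ordered pair of nodes (equivalently, out-degree at most $2N$), giving $\Oh(N^2)$. Your added care about including the control state and head position in the configuration is a reasonable refinement of the same bookkeeping, not a different method.
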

\begin{proof}
The bound for number of nodes is clear from the construction of the configuration graph. We can have at most two transition edges between any two (possibly non-distinct) nodes: one corresponding to reading a zero on the input tape, and one corresponding to reading a one. Therefore, asymptotically the configuration graph can have the number of edges that is at most the square of the number of nodes.

For deterministic Turing machines, each node in the configuration graph has exactly two outgoing edges (possibly to the same node). Therefore the number of edges is asymptotically the same as the number of vertices.
\end{proof}

Now Lemma \ref{lem:xc_walksignatures} can be used to bound the extension complexity of languages accepted by one-pass machines.

\begin{theorem}\label{thm:simulation_lemma_ntm} Let $\LL\in\nspace{1}{s(n)}.$
 Then, $\xc(\LL)=\Oh(4^{s(n)}(s(n))^2\cdot n).$ 
\end{theorem}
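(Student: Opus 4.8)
The plan is to combine Lemma~\ref{lem:kpasstoonepass} (already trivially irrelevant here since $k=1$), the configuration-graph construction, the edge/node counts from Lemma~\ref{lem:cg_size}, and the Markovian-polytope bound from Lemma~\ref{lem:xc_walksignatures}. First I would fix $n$ and let $M_n$ be the one-pass nondeterministic Turing machine using space $s(n)$ that accepts exactly $\LL(n)$ (the machine may depend on $n$; this is harmless because we only need one polytope per length). Build the configuration graph $D_n=(V_n,A_n)$ for input length $n$ as in the definition preceding Lemma~\ref{lem:cg_size}: vertices are full worktape-plus-head configurations, edges correspond to blocks of transitions consuming exactly one input bit, labeled by that bit, plus the distinguished start and finish nodes with zero-labeled edges.

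The key observation I would make is that a string $\xx\in\{0,1\}^n$ is accepted by $M_n$ if and only if there is a walk of length $n+2$ (the extra two edges being the zero-labeled start and finish edges) in $D_n$ from the start node to the finish node whose signature is $0\,\xx\,0$; equivalently, after projecting away the two forced zero coordinates and fixing the source and destination to the start and finish nodes, $P(\LL(n))$ is exactly the image of the Markovian polytope $\pwalk{n+2}{D_n}$ under the projection that discards the $u$ and $v$ blocks (which are constant on the relevant face) and the two padding coordinates. So $P(\LL(n))$ is a projection of a face of $\pwalk{n+2}{D_n}$, hence $\xc(P(\LL(n)))\leqslant\xc(\pwalk{n+2}{D_n})$.

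Now I would just plug in the bounds. By Lemma~\ref{lem:xc_walksignatures}, $\xc(\pwalk{n+2}{D_n})\leqslant 2|V_n|+|A_n|\cdot(n+2)$. By Lemma~\ref{lem:cg_size}, $|V_n|=\Oh(2^{s(n)}s(n))$ and, since $M_n$ is nondeterministic, $|A_n|=\Oh(4^{s(n)}(s(n))^2)$. Substituting gives $\xc(P(\LL(n)))\leqslant 2\cdot\Oh(2^{s(n)}s(n))+\Oh(4^{s(n)}(s(n))^2)\cdot(n+2)=\Oh(4^{s(n)}(s(n))^2\cdot n)$, where the last step absorbs the lower-order node term (since $2^{s(n)}s(n)=\Oh(4^{s(n)}(s(n))^2 n)$) and replaces $n+2$ by $\Oh(n)$. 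Since this holds for every $n$, we conclude $\xc(\LL)=\Oh(4^{s(n)}(s(n))^2\cdot n)$.

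The only genuinely delicate point, and the one I would be most careful to state cleanly, is the correspondence between accepting computations and walks: I need the configuration graph's edges to encode "read exactly one input bit" blocks so that a walk of the right length has signature equal to the input string read in order, and I need to handle the boundary (start/finish) edges so that the signature is $\xx$ padded by two zeros rather than $\xx$ itself — hence reading a walk of length $n+2$ rather than $n$. Everything else is bookkeeping: the face restriction that pins the source to the start node and the destination to the finish node, and the projection discarding the two constant padding coordinates, are both operations that do not increase extension complexity. No new combinatorial argument is needed beyond what Lemmas~\ref{lem:xc_walksignatures} and~\ref{lem:cg_size} already provide.
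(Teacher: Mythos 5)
Your proposal is correct and follows essentially the same route as the paper: build the configuration graph, observe that accepted length-$n$ strings correspond to signatures of length-$(n+2)$ walks from the start node to the finish node (with two forced zero padding bits), take the corresponding face of $\pwalk{n+2}{D}$, and apply Lemma~\ref{lem:xc_walksignatures} with the edge count from Lemma~\ref{lem:cg_size}. You are somewhat more explicit than the paper about the projection that discards the constant coordinates and about absorbing the lower-order node term, but there is no substantive difference.
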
 
\begin{proof}
Let $\LL\in\nspace{1}{s(n)}$ be a 
language. That is, there exists a Turing machine that when supplied with a 
string on the one-way input tape uses at most $s(n)$ cells on the work tape, makes
a single pass over the input and then accepts or rejects the input. If the input 
string is in $\LL,$ some sequence of non-deterministic choices lead the machine to
an accepting state, otherwise the machine always rejects.

The length-$n$ strings that are accepted by such a Turing machine correspond exactly to the signatures of length $n+2$ walks on the corresponding configuration graph $D$. The first and the last character of these strings is always zero. Therefore, an extended formulation for $P(\LL(n))$ is obtained by taking the face of $\pwalk{n+2}{D}$ corresponding to walks that start at the start node and finish at the finish node. By Lemma \ref{lem:xc_walksignatures}, $\pwalk{n+2}{D}$ has extension complexity  $\Oh(4^{s(n)}(s(n))^2\cdot n)$, and so does the desired face.
\end{proof}

If $\LL$ is accepted by a one-pass deterministic TM then one can do better because the configuration graph has fewer edges.

\begin{theorem}\label{thm:simulation_lemma_dtm} Let $\LL\in\dspace{1}{s(n)}.$
 Then, $\xc(\LL)=\Oh(2^{s(n)}s(n)\cdot n).$
\end{theorem}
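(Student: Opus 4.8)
The plan is to mirror the proof of Theorem~\ref{thm:simulation_lemma_ntm} exactly, replacing the non-deterministic edge count with the deterministic one. First I would invoke Lemma~\ref{lem:kpasstoonepass} only implicitly -- since we are already given $\LL\in\dspace{1}{s(n)}$, there is a one-pass deterministic online Turing machine $M$ accepting $\LL$ using at most $s(n)$ cells on the work tape. For each $n$, build the configuration graph $D=D_n$ for inputs of length $n$ as in the definition preceding Lemma~\ref{lem:cg_size}, with the two distinguished start and finish nodes and all transition edges labelled by the input bit consumed between consecutive configurations.

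Next I would observe, exactly as in the proof of Theorem~\ref{thm:simulation_lemma_ntm}, that the length-$n$ strings accepted by $M$ are precisely the signatures of length-$(n+2)$ walks in $D$ from the start node to the finish node (the two padding zeros accounting for the label-zero edges out of the start node and into the finish node). Hence $P(\LL(n))$ is the projection of the face of $\pwalk{n+2}{D}$ obtained by fixing the source coordinate to the start node and the destination coordinate to the finish node; taking a face and then projecting does not increase extension complexity. By Lemma~\ref{lem:xc_walksignatures}, $\xc(\pwalk{n+2}{D})\leqslant 2|V(D)|+|A(D)|\cdot(n+2)$.

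Finally I would plug in the deterministic bounds from Lemma~\ref{lem:cg_size}: $|V(D)|=\Oh(2^{s(n)}s(n))$ and, crucially, $|A(D)|=\Oh(2^{s(n)}s(n))$ as well, since each configuration node of a deterministic machine has exactly two outgoing transition edges. Therefore
\[
\xc(\LL)\ \leqslant\ \xc\bigl(\pwalk{n+2}{D}\bigr)\ \leqslant\ 2|V(D)|+|A(D)|\cdot(n+2)\ =\ \Oh\bigl(2^{s(n)}s(n)\cdot n\bigr),
\]
which is the claimed bound. There is essentially no obstacle here beyond bookkeeping: the only point requiring a moment's care is confirming that the face-plus-projection argument of Theorem~\ref{thm:simulation_lemma_ntm} goes through verbatim in the deterministic setting -- it does, because the configuration-graph construction and Lemma~\ref{lem:xc_walksignatures} are agnostic to determinism, and the sole place determinism enters is the improved edge count, which feeds directly into the final estimate.
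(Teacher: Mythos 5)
Your proposal is correct and follows exactly the route the paper intends: the paper omits a formal proof of Theorem~\ref{thm:simulation_lemma_dtm}, noting only that one "can do better because the configuration graph has fewer edges," and your argument fills in precisely that -- rerunning the proof of Theorem~\ref{thm:simulation_lemma_ntm} with the deterministic edge bound $|A(D)|=\Oh(2^{s(n)}s(n))$ from Lemma~\ref{lem:cg_size}. No discrepancies to report.
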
 

\subsection{Extensions for multiple-pass machines}
\begin{theorem}\label{thm:online_ntm} Let $\LL\in\nspace{p(n)}{s(n)}$. Then,  
$\xc(\LL)= 2^{\Oh(p(n)s(n))}n.$
\end{theorem}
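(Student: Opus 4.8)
The plan is to reduce the $p$-pass case to the one-pass case via Lemma~\ref{lem:kpasstoonepass}, and then apply Theorem~\ref{thm:simulation_lemma_ntm}. First I would invoke Lemma~\ref{lem:kpasstoonepass}: since $\LL \in \nspace{p}{s(n)}$, we have $\LL \in \nspace{1}{\Oh(p(n) s(n))}$. Here I am reading the $p$ in the statement as possibly depending on $n$ (the conclusion writes $p(n)$), so the simulator $S$ of Lemma~\ref{lem:kpasstoonepass} uses $p(n)$ work tapes and total space $\Oh(p(n) s(n))$ on a single pass. This is the conceptual heart of the argument, and it is already done for us.

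Next I would feed this one-pass machine into Theorem~\ref{thm:simulation_lemma_ntm}, which says that a language in $\nspace{1}{s'(n)}$ has extension complexity $\Oh(4^{s'(n)} (s'(n))^2 \cdot n)$. Substituting $s'(n) = \Oh(p(n) s(n))$ gives $\xc(\LL) = \Oh\!\left(4^{\Oh(p(n)s(n))} (p(n)s(n))^2 \cdot n\right)$. The remaining step is purely cosmetic bookkeeping with asymptotic notation: the factor $4^{\Oh(p(n)s(n))} = 2^{\Oh(p(n)s(n))}$, and the polynomial factor $(p(n)s(n))^2$ is swallowed into the exponential $2^{\Oh(p(n)s(n))}$ (since $(p(n)s(n))^2 = 2^{2\log(p(n)s(n))} = 2^{o(p(n)s(n))}$ whenever $p(n)s(n) \to \infty$; the bounded case is trivial). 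Hence $\xc(\LL) = 2^{\Oh(p(n)s(n))} n$, as claimed.

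The only point requiring a moment's care — and the place where I would expect a referee to push back — is whether folding $(p(n)s(n))^2$ into $2^{\Oh(p(n)s(n))}$ is legitimate across all growth rates of $p(n)s(n)$, and whether the implicit constants in the two cited results compose cleanly. They do, because $\Oh(\cdot)$ in the exponent absorbs both the constant blow-up from the tape simulation in Lemma~\ref{lem:kpasstoonepass} and the squared polynomial overhead from the configuration-graph edge count in Lemma~\ref{lem:cg_size}; no new ideas are needed. So the theorem is essentially a one-line corollary of the two preceding results, and I would present it as such.
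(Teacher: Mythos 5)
Your proposal is correct and follows exactly the paper's own route: the paper proves this theorem by combining Lemma~\ref{lem:kpasstoonepass} with Theorem~\ref{thm:simulation_lemma_ntm}, stating that it ``follows immediately'' from those two results. Your additional care about absorbing the polynomial factor $(p(n)s(n))^2$ into the exponential is a sound elaboration of a step the paper leaves implicit.
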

\begin{proof}
This follows immediately from Lemma \ref{lem:kpasstoonepass} and Theorem \ref{thm:simulation_lemma_ntm}.
\end{proof}

\begin{theorem} Let $\mathcal{M}$ be a (not necessarily uniform) family of deterministic 
online Turing machines. Let the number of passes and the space 
used by the family be bounded by functions, $p(n),s(n)$
respectively. Let $\LL(\mathcal{M})$ be the language accepted by $\mathcal{M}$. 
Then,  $\xc(\LL(\mathcal{M}))\leqslant 2^{\Oh(p(n)s(n))}n.$
\end{theorem}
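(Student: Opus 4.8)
The plan is to observe that the chain of lemmas culminating in Theorem~\ref{thm:online_ntm} is carried out one input length at a time, and that uniformity of the machine is never actually used; so the same argument goes through verbatim for a non-uniform family. Write $\mathcal{M}=\{M_n\}_{n\in\mathbb{N}}$, where $M_n$ is the deterministic machine responsible for inputs of length $n$, making at most $p(n)$ passes and using at most $s(n)$ space, and recall that $\LL(\mathcal{M})(n)=\{\xx\in\{0,1\}^n \mid M_n \text{ accepts } \xx\}$. It suffices to bound $\xc(P(\LL(\mathcal{M})(n)))$ for each fixed $n$, since $\xc$ of a language is defined polytope-by-polytope.

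Fix $n$. Since a deterministic machine is in particular non-deterministic, $M_n$ is a $p(n)$-pass non-deterministic machine using space $s(n)$. Apply Lemma~\ref{lem:kpasstoonepass} to $M_n$ — the statement and proof of that lemma already speak only of ``the Turing machine that accepts strings of length $n$'', so it applies to the single machine $M_n$ with no reference to any uniform description — to obtain an equivalent single-pass non-deterministic machine $S_n$ using space $\Oh(p(n)s(n))$. Now build the configuration graph $D_n$ of $S_n$ for inputs of length $n$ and argue exactly as in the proof of Theorem~\ref{thm:simulation_lemma_ntm}: the length-$n$ strings accepted by $S_n$ are precisely the signatures of length-$(n{+}2)$ walks from the start node to the finish node of $D_n$, so $P(\LL(\mathcal{M})(n))$ is a face of $\pwalk{n+2}{D_n}$. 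By Lemma~\ref{lem:cg_size}, $D_n$ has $2^{\Oh(p(n)s(n))}$ nodes and edges, so Lemma~\ref{lem:xc_walksignatures} gives $\xc(\pwalk{n+2}{D_n})\leqslant 2^{\Oh(p(n)s(n))}\cdot n$, and the same bound holds for any face. Since this holds for every $n$, by definition $\xc(\LL(\mathcal{M}))\leqslant 2^{\Oh(p(n)s(n))}n$.

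The only point that needs checking — and it is essentially the whole content of the theorem — is that none of Lemma~\ref{lem:kpasstoonepass}, the configuration-graph construction, Lemma~\ref{lem:cg_size}, or Lemma~\ref{lem:xc_walksignatures} requires a finite description of the entire family: each invocation touches only $M_n$ (hence $S_n$, $D_n$) for the single length $n$ under consideration. So I expect no real obstacle. One mild subtlety worth noting is that one cannot replace the non-deterministic route by the sharper deterministic bound of Theorem~\ref{thm:simulation_lemma_dtm}, because the deterministic analogue of Lemma~\ref{lem:kpasstoonepass} is not available — a single left-to-right pass cannot simulate pass $i{+}1$ before pass $i$ has finished reading the input, whereas the non-deterministic simulator sidesteps this by guessing the intermediate configurations. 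This costs nothing here, since the $2^{\Oh(\cdot)}$ in the target bound absorbs both the difference between $4^{s(n)}$ and $2^{s(n)}$ and the polynomial factors in $p(n)s(n)$.
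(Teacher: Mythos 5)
Your proof is correct and takes essentially the route the paper intends: the paper states this theorem without proof, treating it as the non-uniform analogue of Theorem~\ref{thm:online_ntm}, which is exactly the chain (Lemma~\ref{lem:kpasstoonepass}, the configuration graph, Lemma~\ref{lem:cg_size}, Lemma~\ref{lem:xc_walksignatures}) you invoke, together with the observation that every step is carried out one input length at a time and never uses a uniform description of the machine. Your side remark that the deterministic bound of Theorem~\ref{thm:simulation_lemma_dtm} is unavailable for multi-pass machines (since collapsing passes requires guessing intermediate configurations) is accurate and harmless, as the stated bound absorbs the difference.
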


\begin{corollary} 
If $\LL$ is accepted by a fixed-pass non-deterministic $\logspace$ Turing machine 
then $\LL\in\cf$.
\end{corollary}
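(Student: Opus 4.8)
The plan is to obtain this as an immediate specialization of Theorem \ref{thm:online_ntm}, which already contains all the substantive work: the collapse of a $k$-pass machine to a one-pass machine with a constant-factor blow-up in space (Lemma \ref{lem:kpasstoonepass}), the passage from a one-pass machine to its configuration graph, and the bound on the extension complexity of the associated Markovian walk polytope (Lemma \ref{lem:xc_walksignatures}). So the first thing I would do is simply unpack the hypothesis: ``fixed-pass non-deterministic $\logspace$'' means that there are absolute constants $k$ and $c$ such that $\LL$ is accepted by a $k$-pass non-deterministic online Turing machine whose work space on inputs of length $n$ is bounded by $c\log n$; in the paper's notation this says $\LL\in\nspace{p}{s(n)}$ with $p(n)=\Oh(1)$ and $s(n)=\Oh(\log n)$.

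Next I would substitute these two bounds into the conclusion of Theorem \ref{thm:online_ntm}, namely $\xc(\LL)=2^{\Oh(p(n)s(n))}\,n$. Since $p(n)s(n)=\Oh(\log n)$, the exponential prefactor is $2^{\Oh(\log n)}=n^{\Oh(1)}$, and therefore $\xc(\LL)\leqslant n^{\Oh(1)}\cdot n=n^{\Oh(1)}$. Concretely, this gives constants $a,b>0$ with $\xc(P(\LL(n)))\leqslant a\,n^{b}$ for all sufficiently large $n$; choosing $c'$ with $n^{c'}\geqslant a\,n^{b}$ for large $n$ (e.g.\ $c'=b+1$), we get $\xc(\LL)\leqslant n^{c'}$, which is exactly the defining condition for $\LL\in\cf$.

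I do not expect a genuine obstacle here; the statement is essentially a corollary by arithmetic. The only points that need a little care are bookkeeping ones: checking that ``fixed-pass'' is quantified as an absolute constant number of passes (so that $p(n)s(n)$ really is $\Oh(\log n)$ rather than something larger), and noting that the constant hidden in the $\Oh(\cdot)$ inside the exponent depends on the accepting machine but not on $n$, so that it is absorbed into a fixed polynomial degree. It is worth remarking, finally, that the same computation gives the deterministic analogue via Theorem \ref{thm:simulation_lemma_dtm}, and that the argument in fact only uses $p(n)s(n)=\Oh(\log n)$, so the corollary as stated is somewhat conservative.
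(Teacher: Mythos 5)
Your proposal is correct and matches the paper's (implicit) argument exactly: the corollary is stated there without proof as an immediate consequence of Theorem~\ref{thm:online_ntm}, obtained by substituting $p(n)=\Oh(1)$ and $s(n)=\Oh(\log n)$ so that $2^{\Oh(p(n)s(n))}n = n^{\Oh(1)}$. Your bookkeeping remarks about the constants being absorbed into a fixed polynomial degree are the right points to check, and nothing further is needed.
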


We end this section with the following remark. For a language to be compact (that is, to have polynomial extension complexity), it is sufficient to be accepted by an online Turing machine (deterministic or not) that requires only logarithmic space. However, this requirement is clearly not necessary. This can be proved by contradiction: Suppose that the condition is necessary. Then the class of compact languages must be closed under taking intersection. (Simply chain the two accepting machines and accept only if both do). Since we have already established (cf. Theorem \ref{thm:closure_intersection}) that the class of compact languages is not closed under taking intersection, we have a contradiction.

\section{Applications}\label{sec:xc_1tm_app}
\subsection{Polytopes of certificates: The nondeterministic $\logspace$ class}
Traditionally, the polytope associated with a given problem instance is the convex hull of certificates for that instance. For example, the CUT polytope of a graph is the convex hull of all edge-cuts, the perfect matching polytope of a graph is the convex hull of all perfect matchings, etc. This motivates the following definition of natural polytopes associated with problems.

\begin{definition}\label{def:nl_polytopes}
 Let $\LL\subseteq\{0,1\}^*$ be a language and let $M$ be a verifier for certificates for $\LL$. For any instance $\xx\in\{0,1\}^n$ the $\LL_M$-polytope of $\xx$ -- denoted by $P_{(\LL,M)}(\xx)$ -- is defined to be the convex hull of all strings $\yy\in\{0,1\}^{q(n)}$ such that $M(\xx,\yy)=1$ where $M(\xx,\yy)$ denotes the output of $M$ when provided with $\xx$ and $\yy$ on two input tapes.
\end{definition}

Certificate based definition of the class $\np$ is well known, where the certificates are required to be checkable in polynomial time. A less well known certificate based definition is that of the class $\NL$: the class of languages accepted by nondeterministic logspace Turing machines.

\begin{definition}[\cite{Arora_book}]\label{def:nl}
A language $\LL\subseteq\{0,1\}^*$ is in $\NL$ if and only if there exists a deterministic logspace Turing machine $M$ and a polynomial function $q(.)$ such that $$\xx\in\LL\iff \exists \uu\in\{0,1\}^{q(|\xx|)}~\text{ and }~M(\xx,\uu)=1,$$ where $\uu$ is given on a special tape that can be read only from left to right, and $M(\xx,\uu)$ denotes the output of $M$ when $\xx$ is placed on the input tape and $\uu$ is placed on the one-way tape, and $M$ uses at most $\Oh(\log{|\xx|})$ space on its read/write work tape on every input $\xx$. 
\end{definition}

Let $\LL\subseteq\{0,1\}^*$ be a language in $\NL$ and let $M$ be the Turing machine that accepts certificates of $\LL$ as in the previous definition. Then, for each fixed input $\xx$ the set of certificates is accepted by a one-pass logspace Turing machine and therefore their convex hull has extension complexity upper bounded by a polynomial with the degree of the polynomial depending on the constant of the logspace use of the work tape by $M$. Therefore, we have the following.

\begin{theorem}
 Let $\LL\in\NL$ be a language and let $M$ be the Turing machine accepting certificates of $\LL$ as in Definition \ref{def:nl}. For any instance $\xx\in\{0,1\}^n$ the polytope $P_{(\LL,M)}(\xx)=\conv\{\yy\in\{0,1\}^{q(n)}~|~M(\xx,\yy)=1\}$ has polynomial extension complexity. 
\end{theorem}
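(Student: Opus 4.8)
The plan is to reduce, for each fixed instance $\xx\in\{0,1\}^n$, the two‑tape verifier $M$ of Definition \ref{def:nl} to a single one‑pass deterministic online Turing machine $M_{\xx}$ that reads only the certificate string $\yy$, and then to invoke Theorem \ref{thm:simulation_lemma_dtm} (equivalently, Lemma \ref{lem:xc_walksignatures} applied to the configuration graph of $M_{\xx}$). First I would hard‑wire $\xx$ into the finite control of $M$: the resulting machine $M_{\xx}$ has a one‑way input tape carrying $\yy$ and a read/write work tape, and it simulates $M$ on $(\xx,\yy)$ by keeping, as part of its own state and work content, (i) the current state of $M$, (ii) the head position on the now‑internalised $\xx$‑tape, which costs $\lceil\log n\rceil$ bits, and (iii) the $\Oh(\log n)$ cells of $M$'s work tape. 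Since $|\xx|=n$ and $M$ is logspace, $M_{\xx}$ uses space $s(n)=\Oh(\log n)$ and makes a single left‑to‑right pass over $\yy$, reading one certificate bit per macro‑step (between two consecutive certificate reads $M$ only moves on the $\xx$‑tape and work tape, and since its configuration space is finite it either reaches the next read in finitely many steps or loops forever without accepting; this is exactly what the configuration‑graph construction preceding Lemma \ref{lem:cg_size} handles).

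Next, I would apply the deterministic one‑pass bound. By construction the length‑$q(n)$ certificates $\yy$ accepted by $M_{\xx}$ are precisely the signatures of the start‑to‑finish walks of length $q(n)+2$ in the configuration graph $D$ of $M_{\xx}$, so $P_{(\LL,M)}(\xx)$ is a face of $\pwalk{q(n)+2}{D}$. By Lemma \ref{lem:cg_size}, $D$ has $\Oh(2^{s(n)}s(n))$ nodes and, being deterministic, $\Oh(2^{s(n)}s(n))$ edges; Lemma \ref{lem:xc_walksignatures} then gives $\xc(P_{(\LL,M)}(\xx))=\Oh\!\left(2^{s(n)}s(n)\cdot q(n)\right)$, which is the bound of Theorem \ref{thm:simulation_lemma_dtm} read with input length $q(n)$. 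It remains to observe that this is polynomial in $n$: since $M$ is logspace we have $s(n)=c\log n+\Oh(1)$ for a constant $c$, so $2^{s(n)}=\Oh(n^{c})$, and since $q$ is a polynomial $q(n)=\Oh(n^{d})$ for a constant $d$; hence the bound is $\Oh(n^{c+d}\log n)$, polynomial in $n$, with degree depending only on the logspace constant of $M$ and the degree of $q$.

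The only genuinely delicate point is the folding step: one must check that internalising the two‑way input tape carrying $\xx$ really costs only $\Oh(\log n)$ extra space — the head pointer — and does \emph{not} turn the certificate tape into a multi‑pass tape. It does not, because $M$ accesses $\uu$ only from left to right by hypothesis, so $M_{\xx}$ inherits a single pass over $\yy$; this is where the one‑way access to the certificate in Definition \ref{def:nl} is essential (cf. the remark that $\logspace$ machines with two‑way input can have exponential extension complexity). A secondary, purely bookkeeping point is that the statement concerns one polytope rather than a language family, so instead of manufacturing an artificial length‑restricted language $\LL_{\xx}$ to feed into Theorem \ref{thm:simulation_lemma_dtm} verbatim, it is cleanest to apply Lemma \ref{lem:xc_walksignatures} directly to the single configuration graph of $M_{\xx}$ at input length $q(n)$ — which is exactly the inequality used in the proof of that theorem.
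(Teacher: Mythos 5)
Your proposal is correct and follows essentially the same route as the paper: fix $\xx$, fold it (together with a $\lceil\log n\rceil$-bit head pointer) into the finite control to obtain a one-pass deterministic logspace machine reading only $\yy$, and invoke the one-pass simulation bound (Theorem \ref{thm:simulation_lemma_dtm} via Lemma \ref{lem:xc_walksignatures}) on its configuration graph. The paper states this in one sentence and leaves the details implicit; you have simply supplied the bookkeeping it omits.
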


\subsection{Streaming lower bounds}
Reading Theorem \ref{thm:online_ntm} in contrapositive immediately yields lower bounds in the streaming model of computation. We illustrate this by an example.

\begin{example}
 We know that the perfect matching polytope of the complete graph $K_n$ has extension complexity $2^{\Omega(n)}$ \cite{stoc/Rothvoss14}. Any $p(n)$-pass algorithm requiring space $s(n)$, that correctly determines whether a given stream of $\binom{n}{2}$ is the characteristic vector of a perfect matching in $K_n$, must have $p(n)s(n)=\Omega(n).$ This bound applies even to non-deterministic algorithms.
\end{example}

In fact Lemma \ref{lem:xc_walksignatures} provides an even stronger lower bound.

\begin{definition}
 Let $\LL\subseteq\{0,1\}^*$ be a language. $\LL$ is said to be online $\mu$-magic if there exists a Turing machine $T$ that accepts $\LL$ with the following oracle access. On an input of length $n$ on the one-way input tape, the machine $T$ scans the input only once. At any time (possibly multiple times) during the scanning of the input, $T$ may prepare its working tape to describe\footnote{The description is required only to identify the function uniquely and need not be explicit.} any function $\ff:\{0,1\}^{\mu(n)}\to\{0,1\}^{\mu(n)}$ and a particular input $\xx$ and invoke the oracle that changes the contents of the work-tape to $\ff(\xx)$. The machine must always reject strings not in $\LL$. For strings in $\LL$ there must be some possible execution resulting in accept.
\end{definition}

Notice that the working of even such a machine can be encoded in terms of the configuration graph where the transitions may depend arbitrarily but in a well-formed way on the contents of the work-tape.

\begin{theorem}
 If the set of characteristic vectors of perfect matchings in $K_n$ is accepted by an online $\mu$-magic Turing machine, then $\mu(n)=\Omega(n).$
\end{theorem}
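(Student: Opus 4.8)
The plan is to run the argument of Theorem~\ref{thm:simulation_lemma_ntm} essentially verbatim, the one genuinely new point being that the ``magic'' oracle, powerful as it is, cannot enlarge a configuration graph beyond the size already permitted for ordinary non-deterministic machines.

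Fix $n$ and write $N:=\binom{n}{2}$ for the length of a characteristic vector of a perfect matching of $K_n$; suppose an online $\mu$-magic Turing machine $T$ accepts exactly these vectors among strings of length $N$. Following the construction that precedes Lemma~\ref{lem:cg_size}, I would build a directed graph $D_n$ whose nodes are the complete configurations of $T$ on inputs of length $N$ (work-tape contents, work-head position, and control state), of which there are $\Oh(2^{\mu(n)}\mu(n))$, augmented with a start node and a finish node. I put an edge labelled $b\in\{0,1\}$ from configuration $u$ to configuration $v$ whenever $T$, started in $u$, can read exactly one input bit (namely $b$) and then, by an arbitrary finite alternation of ordinary transitions and oracle invocations, reach $v$; the start node gets a $0$-labelled edge to each initial configuration, and the finish node a $0$-labelled edge from each accepting configuration. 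The key observation is that an edge of $D_n$ is merely an ordered pair of nodes together with one of two labels, so \emph{regardless} of how the oracle behaves, $D_n$ has at most $\Oh(4^{\mu(n)}\mu(n)^2)$ edges --- exactly the bound of Lemma~\ref{lem:cg_size} in the non-deterministic case. The oracle only changes \emph{which} ordered pairs are edges; it cannot increase their number.

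Now, exactly as in Theorem~\ref{thm:simulation_lemma_ntm}, a string $\xx\in\{0,1\}^{N}$ is accepted by $T$ if and only if $0\,\xx\,0$ is the signature of a walk of length $N{+}2$ in $D_n$ from the start node to the finish node (an accepting run of $T$ on $\xx$ is precisely such a walk, and conversely). Hence the perfect matching polytope of $K_n$ is obtained from the face of $\pwalk{N+2}{D_n}$ with source the start node and destination the finish node by projecting onto the signature coordinates and dropping the two identically-zero end coordinates; since faces and projections do not increase extension complexity, Lemma~\ref{lem:xc_walksignatures} gives
$$\xc\bigl(\PM(K_n)\bigr)\;\leqslant\;2\,|V(D_n)|+|A(D_n)|\cdot(N{+}2)\;=\;4^{\mu(n)}\,n^{\Oh(1)}.$$
On the other hand Rothvo\ss~\cite{stoc/Rothvoss14} proved $\xc(\PM(K_n))=2^{\Omega(n)}$. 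Combining the two bounds and taking logarithms yields $\Omega(n)\leqslant 2\mu(n)+\Oh(\log n)+\Oh(\log\mu(n))$, which forces $\mu(n)=\Omega(n)$: if $\mu(n)\leqslant n$ the error terms are $\Oh(\log n)$ and the inequality gives it directly, and otherwise $\mu(n)=\Omega(n)$ trivially.

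The part that needs care, rather than real difficulty, is the bookkeeping in the third paragraph: one must check that all the oracle calls and non-deterministic branching happening between two consecutive input reads collapse into a plain relation on the $\Oh(2^{\mu(n)}\mu(n))$ configurations (no auxiliary state is needed to remember ``where in the oracle sequence'' the machine currently is), and that non-terminating executions are harmless --- only accepting, hence finite, runs produce walks reaching the finish node, and a string being rejected merely corresponds to the absence of such a walk, which is exactly what makes the signature set of start-to-finish walks coincide with the accepted language. Once these points are pinned down, the remaining steps are identical to those already carried out for Theorem~\ref{thm:simulation_lemma_ntm}.
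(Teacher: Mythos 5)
Your proposal is correct and follows exactly the route the paper intends: the paper's entire justification is the remark that a $\mu$-magic machine's behaviour still collapses to a configuration graph on $\Oh(2^{\mu(n)}\mu(n))$ nodes, so Lemma~\ref{lem:xc_walksignatures} bounds $\xc(\PM(K_n))$ by $4^{\mu(n)}\cdot n^{\Oh(1)}$ (after the trivial case split for large $\mu$), which contradicts Rothvo\ss's $2^{\Omega(n)}$ lower bound unless $\mu(n)=\Omega(n)$. You have simply written out the details the paper leaves implicit, including the correct observation that the oracle changes which pairs of configurations are connected but not how many edges can exist.
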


Thus we see that extension complexity lower bounds highlight deep limitations of the streaming model: even powerful oracles do not help solve in sublinear space problems that are $\logspace$ solvable if the one-way restriction on the input is removed.

\subsection{Upper bounds from online algorithms}

\subsubsection*{Parity Polytope}
As an example, consider the language containing strings where the last bit 
indicates the parity of the previous bits. This language can be accepted by a 
deterministic $\logspace$ Turing machine requiring a single pass over the input 
and a single bit of space. Therefore, the parity polytope has extension 
complexity $\Oh(n)$.

The parity polytope is known to have extension complexity at most $4n-4$ \cite{CarrK2005}.

\subsubsection*{Integer Partition Polytope}
For non-negative integer $n$ the Integer Partition Polytope, $\textrm{IPP}_n$, is defined as $\textrm{IPP}_n:=\conv\{x\in\mathbb{Z}^n_{+}|\sum_{k=1}^nkx_k=n\}.$

It is known that $\xc(\textrm{IPP}_n)=\Oh(n^3)$ \cite{dam/OnnS15}. 

%

Consider the polytope in $\RR^{\lceil\log{n}\rceil\times n}$ that encodes each $x_i$ as a (little-endian) binary string. For example, for $n=4$ the vector $(2,1,0,0)$ is encoded as $(0,1,1,0,0,0,0,0)$. This polytope is clearly an extended formulation of the Integer Partition Polytope. Call this polytope $\textrm{BIPP}_n$. The following single-pass deterministic algorithm accepts a string $(x_1,x_2,\ldots,x_n)\in\{0,1\}^{\lceil\log{n}\rceil\times n}$ if and only if the string represents a vertex of $\textrm{BIPP}_n$.

{
\begin{algorithm}[H]
 \KwData{Binary string of length $n\lceil\log{n}\rceil$}
 \KwResult{Accept if the input encodes a vertex of the $\textrm{BIPP}_n$}
 $s=0$; $i=0$; $l=0$\;
 \While{$i<n$}{
  $b=$read\_next\_bit\;
  \eIf{$(s+(i+1)2^lb)>n$}{
   reject\;
   }{
   $s=(s+(i+1)2^lb)$\;
   $l=(l+1)\%\lceil\log{n}\rceil$\;
   \If{$l==0$}{ $i++$\; }{}
  }
 }
 \eIf{ $s==n$ }{ accept\; } { reject\; }

  \caption{One pass algorithm for accepting vertices of $\textrm{BIPP}_n$.}
\end{algorithm}
}

The above algorithm together with Theorem \ref{thm:simulation_lemma_dtm} shows that $\xc(\textrm{IPP}_n)\leqslant \xc(\textrm{BIPP}_n) \leqslant \Oh(n^3\log^2{n}).$ 

\subsubsection*{Knapsack Polytopes}
For a given sequence of (non-negative) integers $(a,b)=(a_1,a_2,\ldots,a_n,b)$, the Knapsack polytope $KS(a,b)$ is defined as $KS(a,b):=\left\{x\in\{0,1\}^n\left|\sum_{i=1}^na_ix_i\leqslant b\right.\right\}.$

The Knapsack polytope is known to have extension complexity super-polynomial in $n$. However, optimizing over $KS(a,b)$ can be done via dynamic programming in time $O(nW)$ where $W$ is the largest number among $a_1,\ldots,a_n,b$.

Suppose the integers $a_i,b$ are arriving in a stream with a bit in between indicating whether $x_i=0$ or $x_i=1$. With a space of $W$ bits, an online Turing machine can store and update $\sum_{i=1}^na_ix_i$. At the end, it can subtract $b$ and accept or reject depending on whether the result is $0$ or not. Any overflow during intermediate steps can be used to safely reject the input. Therefore, the extension complexity of the Knapsack polytope is $O(nW\log{W})$. Note however the extension obtained this way is actually an extended formulation of a polytope encoding all the instances together with their solutions.

 \subsubsection*{Languages in co-$\bm{\dlin}$}
 
 $\dlin$ is the class of languages generated by deterministic linear context-free grammars. That is, $\LL\in\dlin$ if and only if there is a deterministic linear context-free grammar generating $\LL$ \cite{colt/HigueraO02}. The following was proved by Babu, Limaye, and Varma \cite{tcs/BabuLRV13}.
 
 \begin{theorem}[BLV]\label{thm:BLV}
 Let $\LL\in\dlin$. Then there exists a probabilistic one-pass streaming algorithm using $\Oh(\log{n})$ space that accepts every string in $\LL$ and rejects every other string with probability at least $1/n^c$ for some constant $c$.
 \end{theorem}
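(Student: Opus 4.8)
\section*{Proof proposal for the theorem of Babu, Limaye and Varma}

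The plan is to realize the left-to-right parse of a deterministic linear grammar as a one-pass streaming computation, compressing the only part of the parser's state that can grow linearly --- the stack of pending right-hand-side suffixes --- by means of Karp--Rabin fingerprints. Recall that in (a normal form of) a deterministic linear grammar every production has the shape $A \to a B w$ with $a$ a terminal, $B$ a nonterminal and $w \in \Sigma^*$, or else the shape $A \to w$, and the pair (current nonterminal, next input symbol) determines which production must be applied. Hence, scanning $x = x_1 x_2 \cdots x_n$ from left to right, the parser deterministically tracks the current nonterminal while it consumes the ``head'' symbols $a_1,a_2,\dots$; each production used also commits the parser to a suffix obligation $w_i$, and these obligations $w_1, w_2, \dots$ must later be matched against the remaining input in last-in-first-out order, i.e.\ the tail of $x$ must read $w_k\, w_{k-1} \cdots w_1$ up to the terminal production that closes the spine. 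Two things must be squeezed into $\Oh(\log n)$ space: the turnaround position at which the parser stops reading heads and starts checking obligations, and the obligation stack itself, whose total length can be $\Theta(n)$.

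First I would have the algorithm guess the turnaround position among the $n$ possibilities (and, where the grammar leaves a genuine choice, the terminal production that closes the spine). On an input $x\in\LL$ this guess is correct with probability at least $1/n$, and then the parse accepts; this is exactly the source of the $1/n^c$ bound rather than a constant. Second, instead of storing the obligation stack I would maintain a Karp--Rabin fingerprint of it, in top-to-bottom order, together with its current length: pushing an obligation $w_i$ updates the fingerprint by a multiply-and-add using the appropriate power of the random base, and the prime, the residues, the running powers of the base, and the lengths (all bounded by $n$) each occupy $\Oh(\log n)$ bits once the fingerprinting prime is taken of bit-length $\Theta(\log n)$. In the checking phase the algorithm reads the remaining input symbols and feeds them into a second running fingerprint --- this is the string $w_k w_{k-1}\cdots w_1$ it expects to see --- and on reaching the end of $x$ it accepts iff the two fingerprints and the two lengths agree and the parser sits in an accepting configuration. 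Since the one-way input is touched exactly once, this is a legitimate single-pass streaming algorithm; for $x\notin\LL$ the parse fails structurally for every guessed turnaround unless two distinct strings collide under the random fingerprint, and a union bound over the $\Oh(n)$ fingerprint comparisons keeps this failure probability below $1/n^c$ for a suitable prime. Combining the two cases, the algorithm is correct with probability at least $1/n^c$ on every input.

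The main obstacle is the interaction between the stack compression and the one-way restriction: a fingerprint certifies equality of two fully presented strings, but does not by itself let one ``pop and compare'' a single symbol against an opaque stack, so the argument must be arranged so that each obligation is verified only implicitly --- by accumulating the whole checking-phase suffix into one fingerprint and performing a single comparison at the end --- which in turn forces the algorithm to commit up front to the turnaround point, and hence to guess it, and hence to accept members of $\LL$ only with polynomially small probability. Making all these pieces fit simultaneously in $\Oh(\log n)$ space, in particular for grammars whose parse is not fully forced, is precisely the content of the theorem of Babu, Limaye and Varma \cite{tcs/BabuLRV13}, which I would invoke; the sketch above is how I would reconstruct their proof.
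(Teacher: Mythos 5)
First, a remark on the comparison you asked for: the paper does not prove this statement at all. It is quoted as an imported result of Babu, Limaye and Varma and used as a black box (the bracketed ``BLV'' and the citation \cite{tcs/BabuLRV13} are the entire ``proof''), so there is no in-paper argument to measure your sketch against; you acknowledge this yourself by invoking \cite{tcs/BabuLRV13} at the end.

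Judged on its own terms, however, your reconstruction establishes a genuinely different guarantee from the one stated, and the difference matters downstream. The theorem asserts perfect completeness and weak soundness: \emph{every} string of $\LL$ is accepted (on every run), while strings outside $\LL$ are rejected with probability at least $1/n^c$. Your algorithm transposes the two error sides: because you guess the turnaround position and reject whenever the resulting parse fails, a member of $\LL$ is accepted only with probability about $1/n$ (you say so explicitly: ``accept members of $\LL$ only with polynomially small probability''), while non-members are rejected with high probability. This symmetric ``correct with probability $\geqslant 1/n^c$'' statement is not the theorem, and it does not support the Proposition that follows it in the paper ($\LL\in\dlin\Rightarrow\overline{\LL}\in\cf$): there the random bits are reinterpreted as nondeterministic choices of a one-pass $\Oh(\log n)$-space machine that accepts $\overline{\LL}$ precisely when the streaming algorithm rejects, and for that machine to be a legitimate nondeterministic acceptor of $\overline{\LL}$ no random string may cause a member of $\LL$ to be rejected. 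Under your algorithm the $1-1/n$ fraction of wrong turnaround guesses do exactly that. The fingerprinting half of your sketch already has its error on the correct side (equal strings always produce equal fingerprints, so a genuinely valid parse is never falsified); what needs repair is the treatment of the guess. A run whose guessed turnaround is revealed to be inconsistent with the input (for instance because the length bookkeeping $t+\sum_i\lvert w_i\rvert=n$ fails, or the spine cannot be closed there) must \emph{accept} as inconclusive, and rejection must be reserved for runs that locate the unique turnaround consistent with that bookkeeping and then witness a fingerprint mismatch. With that inversion, members are accepted on every run, non-members are rejected on at least the $\Omega(1/n)$ fraction of runs that pick the distinguished turnaround (less the fingerprint collision probability), and the stated $1/n^c$ bound is recovered.
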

 
 Using the above algorithm together with Theorem \ref{thm:online_ntm} we get the following.
 
 \begin{proposition}
 If $\LL\in\dlin$, then $\overline{\LL}\in\cf.$
 \end{proposition}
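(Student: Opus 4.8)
The plan is to combine the BLV theorem with the machine-based upper bound from Theorem~\ref{thm:online_ntm}, observing that the probabilistic streaming algorithm can be turned into a nondeterministic one-pass logspace machine. First I would invoke the BLV theorem: for $\LL\in\dlin$ there is a probabilistic one-pass streaming algorithm $A$ using $\Oh(\log n)$ space that accepts every $\xx\in\LL$ with probability $1$ (i.e.\ on \emph{every} sequence of coin tosses) and accepts every $\xx\notin\LL$ with probability at most $1-1/n^c$ --- equivalently, for $\xx\notin\LL$ there is \emph{some} run of $A$ that rejects. Reading this contrapositively, $A$ is exactly a one-pass nondeterministic logspace machine for $\overline{\LL}$: treat the random bits of $A$ as nondeterministic choices, and swap accept/reject. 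Then $\xx\in\overline{\LL}$ iff some nondeterministic branch of this machine accepts, while if $\xx\in\LL$ every branch rejects. Hence $\overline{\LL}\in\nspace{1}{\Oh(\log n)}$.

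Next I would apply Theorem~\ref{thm:online_ntm} (or directly Theorem~\ref{thm:simulation_lemma_ntm}, since the machine already makes a single pass) with $s(n)=\Oh(\log n)$ and $p(n)=1$, which gives $\xc(\overline{\LL})=2^{\Oh(\log n)}\cdot n=\mathrm{poly}(n)$. By the definition of $\cf$, this is precisely the statement $\overline{\LL}\in\cf$.

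The one point that deserves a careful word --- and the main obstacle --- is verifying that the ``accepts every string in $\LL$'' clause in BLV is the \emph{perfect}-completeness version (probability $1$, not just $\geq 1-1/n^c$). This is what makes the nondeterministic reinterpretation valid: one needs that no branch of the machine erroneously accepts a string of $\LL$ after the accept/reject swap, i.e.\ that in the original algorithm no branch erroneously rejects a string of $\LL$. If BLV only guaranteed two-sided error one would instead get an $\mathrm{RP}$-type guarantee that does not translate cleanly to a nondeterministic machine with the required one-sidedness, so the statement as phrased (``accepts every string in $\LL$ and rejects every other string with probability at least $1/n^c$'') should be read as exactly the one-sided-error form, and I would make this reading explicit. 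Modulo that, the proof is a two-line composition: reinterpret randomness as nondeterminism to place $\overline{\LL}$ in $\nspace{1}{\Oh(\log n)}$, then quote Theorem~\ref{thm:online_ntm}.
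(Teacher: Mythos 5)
Your proof is correct and follows exactly the route the paper intends: the paper simply states that the BLV algorithm combined with Theorem~\ref{thm:online_ntm} yields the result, leaving the details implicit. Your reinterpretation of the random bits as nondeterministic choices with an accept/reject swap, and your observation that the one-sided (perfect-completeness) error guarantee is what makes this swap produce a valid one-pass nondeterministic logspace machine for $\overline{\LL}$, is precisely the missing bridge the paper relies on.
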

\begin{proof}
Let $\LL\in\dlin$. By Theorem \ref{thm:BLV} there exists a probabilistic one-pass algorithm using $\Oh(\log{n})$ space such that if $\xx\in\LL$ then $\xx$ is accepted and if $\xx\notin\LL$ then $\xx$ is rejected with some non-zero probability. Consider this streaming algorithm implemented as a one-pass non-deterministic Turing machine $M$ where the non-deterministic choices of $M$ depend on the probabilistic choices of the algorithm. Now exchange the accept and reject states of $M$ so that $M$ accepts $\xx$ if and only if the algorithm rejects $\xx$. If $\xx\notin\overline{\LL}$ then $\xx\in\LL$ and $M$ rejects $\xx$ for all non-deterministic choices. If $\xx\in\overline{\LL}$ then $\xx\notin\LL$ and there is some non-deterministic choice that leads to $\xx$ being accepted by $M$. Therefore, $\overline{\LL}\in\nspace{1}{\Oh(\log{n})}$ and by Theorem \ref{thm:online_ntm} we have that $\overline{\LL}\in\cf.$
\end{proof}
\section{Conclusion and Outlook}
We have initiated a study of extension complexity of formal languages in this article. We have shown various closure properties of compact languages. This is only a first step in what we hope will be a productive path. We have proved a sufficient machine characterization of compact languages in terms of acceptance by online Turing machines. This property is clearly not necessary. What -- in terms of computational complexity -- characterizes whether or not a language can be represented by small polytopes? We do not know (yet).

\section*{Acknowledgements}
The author would like to acknowledge the support of grant GA15-11559S of GA{\v{C}R}. We also thank Mateus De Oliveira Oliveira for finding a critical flaw in a previous proof of Theorem \ref{thm:closure_kleene_star} and the anonymous referees for many valuable suggestions.

\bibliographystyle{elsarticle-num}
\bibliography{languages}

\end{document}